\newcommand{\RegVer}[1]{#1}
\newcommand{\SoCGVer}[1]{}
\newcommand{\RegVer}[1]{}
\newcommand{\SoCGVer}[1]{#1}
\newcommand{\Undefine}[1]{\let#1\undefined}
\newcommand{\SarielComp}[1]{}
\newcommand{\NotSarielComp}[1]{#1}%
\newcommand{\SarielComp}[1]{#1}%
\newcommand{\NotSarielComp}[1]{}%
\newcommand{\IfPrinterVer}[2]{#2}%
\providecommand{\BibLatexMode}[1]{}
\providecommand{\BibTexMode}[1]{#1}
  \renewcommand{\BibLatexMode}[1]{}
  \renewcommand{\BibTexMode}[1]{#1}
  \renewcommand{\BibLatexMode}[1]{#1}
  \renewcommand{\BibTexMode}[1]{}
\newcommand{\hrefb}[3][black]{\href{#2}{\color{#1}{#3}}}%
   \theoremstyle{remark}%
   \newtheorem{problem}[theorem]{Problem}%
   \newtheorem{defn}[theorem]{Definition}%
   \newtheorem{assumption}[theorem]{Assumption}%
   \newtheorem{observation}[theorem]{Observation}%
   \newtheorem*{remark:unnumbered}{Remark}%
\theoremstyle{plain}%
\newtheorem{theorem}{Theorem}[section]
\newtheorem{lemma}[theorem]{Lemma}
\newtheorem{corollary}[theorem]{Corollary}
\newtheorem{observation}[theorem]{Observation}
\theoremstyle{plain}%
\newtheorem*{remark:unnumbered}[FakeCounter]{Remark}%
\newtheorem{defn}[theorem]{Definition}
\newcommand{\myqedsymbol}{\rule{2mm}{2mm}}
\theoremstyle{nonumberplain}%
\newtheorem{proof}{Proof:}%
\newcommand{\atgen}{\symbol{'100}}%
\newcommand{\SarielThanks}[1]{%
   \thanks{Department of Computer Science; %
      University of Illinois; %
      201 N. Goodwin Avenue; %
      Urbana, IL, 61801, USA; %
      {\tt sariel\atgen{}illinois.edu}; {\tt
         \url{http://sarielhp.org/}.} #1}}
\newcommand{\HLink}[2]{\hyperref[#2]{#1~\ref*{#2}}}
\newcommand{\HLinkSuffix}[3]{\hyperref[#2]{#1\ref*{#2}{#3}}}
\newcommand{\figlab}[1]{\label{fig:#1}}
\newcommand{\figref}[1]{\HLink{Figure}{fig:#1}}
\newcommand{\thmlab}[1]{{\label{theo:#1}}}
\newcommand{\thmref}[1]{\HLink{Theorem}{theo:#1}}
\newcommand{\seclab}[1]{\label{sec:#1}}
\newcommand{\secref}[1]{\HLink{Section}{sec:#1}}
\newcommand{\lemlab}[1]{\label{lemma:#1}}
\newcommand{\lemref}[1]{\HLink{Lemma}{lemma:#1}}%
\providecommand{\eqlab}[1]{}%
\renewcommand{\eqlab}[1]{\label{equation:#1}}
\providecommand{\deflab}[1]{\label{def:#1}}
\newcommand{\defref}[1]{\HLink{Definition}{def:#1}}
\providecommand{\remove}[1]{}%
\newcommand{\pth}[2][\!]{\mleft({#2}\mright)}%
\newcommand{\brc}[1]{\left\{ {#1} \right\}}
\newcommand{\cardin}[1]{\left| {#1} \right|}%
\renewcommand{\Re}{\mathbb{R}}%
\newlist{compactenumA}{enumerate}{5}%
\setlist[compactenumA]{topsep=0pt,itemsep=-1ex,partopsep=1ex,parsep=1ex,%
   label=(\Alph*)}%
   \setlist[compactenumA]{topsep=0pt,itemsep=-1ex,partopsep=1ex,parsep=1ex,%
      leftmargin=1cm,%
      label=(\Alph*)}%
\newlist{compactenuma}{enumerate}{5}%
\setlist[compactenuma]{topsep=0pt,itemsep=-1ex,partopsep=1ex,parsep=1ex,%
   label=(\alph*)}%
   \setlist[compactenuma]{topsep=0pt,itemsep=-1ex,partopsep=1ex,parsep=1ex,%
      leftmargin=1cm,
      label=(\alph*)}%
\newlist{compactenumI}{enumerate}{5}%
\setlist[compactenumI]{topsep=0pt,itemsep=-1ex,partopsep=1ex,parsep=1ex,%
   label=(\Roman*)}%
   \setlist[compactenumI]{topsep=0pt,itemsep=-1ex,partopsep=1ex,parsep=1ex,%
      leftmargin=1cm, %
   label=(\Roman*)}%
\newlist{compactenumi}{enumerate}{5}%
\setlist[compactenumi]{topsep=0pt,itemsep=-1ex,partopsep=1ex,parsep=1ex,%
   label=(\roman*)}%
\setlist[compactenumi]{topsep=0pt,itemsep=-1ex,partopsep=1ex,parsep=1ex,%
   leftmargin=1cm, label=(\roman*)}%
\newlist{compactitem}{itemize}{5}%
\setlist[compactitem]{topsep=0pt,itemsep=-1ex,partopsep=1ex,parsep=1ex,%
   label=\bullet}%
\numberwithin{figure}{section}%
\numberwithin{table}{section}%
\numberwithin{equation}{section}%
\providecommand{\Mh}[1]{#1}%
\providecommand{\Mh}[1]{#1}%
\providecommand{\pth}[1]{({#1})}%
\newcommand{\EdgesX}[1]{\Mh{E}\pth{#1}}%
\newcommand{\edgeX}[1]{\Mh{e}\pth{#1}}%
\newcommand{\eminX}[1]{\Mh{e_{\min}}\pth{#1}}%
\newcommand{\emaxX}[1]{\Mh{e_{\max}}\pth{#1}}%
\newcommand{\spanX}[1]{\Mh{\mathrm{span}}\pth{#1}}%
\newcommand{\IAX}[1]{\Mh{\mathrm{in}}\pth{#1}}%
\newcommand{\OAX}[1]{\Mh{\mathrm{out}}\pth{#1}}%
\newcommand{\diffX}[1]{\Mh{\nabla}\pth{#1}}%
\newcommand{\CHX}[1]{{\mathcal{CH}}\pth{#1}}
\newcommand{\CPoly}{\Mh{\pi}}%
\newcommand{\Poly}{\Mh{\sigma}}%
\newcommand{\Line}{\Mh{\ell}}%
\newcommand{\PP}{\Mh{P}}%
\newcommand{\QP}{\Mh{\EuScript{Q}}}%
\newcommand{\IP}{\Mh{\EuScript{I}}}%
\newcommand{\UP}{\Mh{\EuScript{U}}}%
\newcommand{\CP}{\Mh{\EuScript{C}}}%
\newcommand{\SegSet}{\Mh{S}}%
\newcommand{\PPart}{\Mh{\mathcal{P}}}%
\newcommand{\CNF}{\Mh{\mathcal{C}}}%
\newcommand{\ICover}{\Mh{\Pi}}%
\newcommand{\PSet}{D}%
\newcommand{\Forest}{\Mh{\mathcal{F}}}%
\providecommand{\emphi}[1]{\index{#1}\textcolor{blue}{\bf{\emph{#1}}}}
\newcommand{\Caratheodory}{Carath\'eodory\xspace}
\newcommand{\areaX}[1]{\mathrm{area}\pth{#1}}%
\newcommand{\cF}{\Mh{F}}%
\newcommand{\cG}{\Mh{G}}%
\newcommand{\pa}{\Mh{p}}%
\newcommand{\pb}{\Mh{q}}%
\newcommand{\PB}{\Mh{Q}}%
\newcommand{\etal}{\textit{et~al.}\xspace}
\newcommand{\joinop}{\texttt{join}\xspace}
\newcommand{\splitop}{\texttt{split}\xspace}
\newcommand{\edge}{{e}}
\newcommand{\ray}{{\rho}}
   \newcommand{\myparagraph}[1]{%
      \noindent%
      \textbf{#1} %
   }
   \newcommand{\myparagraph}[1]{\paragraph{#1}}%
\newcommand{\vorigX}[1]{v_{#1}}%
\newcommand{\currX}[1]{\mathrm{curr}\pth{#1}}%
\newcommand{\queue}{\Mh{\mathcal{Q}}}%
\newcommand{\labelX}[1]{\mathrm{label}\pth{#1}}
\newcommand{\DSCH}{\mathcal{D}_{\mathrm{ch}}}
\newcommand{\DSUF}{\mathcal{D}_{\mathrm{uf}}}
\newcommand{\DSRay}{\mathcal{D}_{\mathrm{ray}}}
\newcommand{\polysX}[1]{\IP(#1)}%
\newcommand{\seg}{\Mh{s}}%
\begin{document}

\title{Covering Polygons by Min-Area Convex Polygons}%

\RegVer{%
   \author{%
      Elias Dahlhaus%
      \and%
      Sariel Har-Peled%
      \SarielThanks{%
         Work on this paper was partially supported by a NSF AF award
         CCF-1907400. %
      }%
      \and%
      Alan L. Hu%
   }%
}

\maketitle

\begin{abstract}
    Given a set of disjoint simple polygons
    $\Poly_1, \ldots, \Poly_n$, of total complexity $N$, consider a
    convexification process that repeatedly replaces a polygon by its
    convex hull, and any two (by now convex) polygons that intersect
    by their common convex hull. This process continues until no pair
    of polygons intersect.

    We show that this process has a unique output, which is a cover of
    the input polygons by a set of disjoint convex polygons, of total
    minimum area.  Furthermore, we present a near linear time
    algorithm for computing this partition. The more general problem
    of covering a set of $N$ segments (not necessarily disjoint) by
    min-area disjoint convex polygons can also be computed in near linear time.

    A similar result is already known, see the work by
    Barba \etal
    \cite{bbbs-ccpf-13}.
\end{abstract}

\section{Introduction}

Let $\Poly_1, \ldots, \Poly_n$ be $n$ simple disjoint polygons in the
plane with a total of $N$ vertices (a polygon is simple, if it has no
holes).  We would like to break them into maximal number of groups of
polygons, such that each group can be separated from any other groups
by a line.  This partition can also be interpreted as computing the
minimum area coverage of the input polygons by disjoint convex
polygons.

\begin{figure}[h]
    \begin{tabular}{c|c|c}
      \includegraphics[page=1,width=0.3\linewidth]{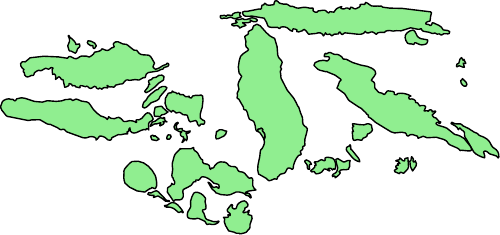}%
      &
        \includegraphics[page=2,width=0.3\linewidth]{figs/solomon_islands}%
      &
        \includegraphics[page=3,width=0.3\linewidth]{figs/solomon_islands}
      \\%
      \hline
      &&\\
      \includegraphics[page=4,width=0.3\linewidth]{figs/solomon_islands}
      &
        \includegraphics[page=6,width=0.3\linewidth]{figs/solomon_islands}
      &
        \includegraphics[page=7,width=0.3\linewidth]{figs/solomon_islands}
    \end{tabular}
    \caption{Convexification in action.}
    \figlab{example}
\end{figure}

Specifically, a \emph{convex cover} is a minimal set of disjoint
convex polygons that cover the input polygons.  We are interested in
computing the convex cover that has the following two equivalent
properties: (i) minimizes the total area of the convex polygons, or
(ii) maximizes the total number of polygons $m$ in the cover,

\SoCGVer{\bigskip}%
\myparagraph{Convexification.} %
The desired partition can be by computed by a \emph{convexification}
process. Specifically, given a finite set $\IP$ of disjoint simple
polygons in the plane, we start by replacing each polygon of $\IP$
with its convex hull. Next, each pair of (now convex) polygons of
$\IP$ that intersect, are replaced by the convex hull of their
union. The process repeats until no pair of polygons intersect.  See
\figref{example}.

As we show below, this process has a unique well defined output, and
it provides the desired partition of the input polygons.

\SoCGVer{\bigskip}%
\myparagraph{The challenge.} %
The challenge is providing a fast implementation of the
convexification process.  The natural approach is to try and do divide
and conquer. However, since there are inputs for this problem where
the merge process requires a linear number of sequential pairs of
polygons to be merged, this approach would not work directly. The
sequential nature of this process is illustrated in \figref{seq}.

\begin{figure}[h]
    \includegraphics[page=1]{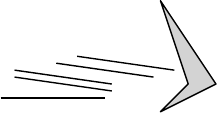} $\implies$
    \includegraphics[page=2]{figs/slow} $\implies$
    \includegraphics[page=3]{figs/slow} $\implies$
    \includegraphics[page=4]{figs/slow}
    
    \hfill $\implies$ \includegraphics[page=5]{figs/slow} $\implies$
    \includegraphics[page=6]{figs/slow} $\implies$
    \includegraphics[page=7]{figs/slow}

    \caption{The sequential nature of the convexification process.}
    \figlab{seq}
\end{figure}
An alternative approach is to try and use a ray shooting
data-structure, but such data-structures are too expensive, since ray
shooting on general disjoint polygons requires $\Omega( n^{4/3} )$
time if one performs a linear number of ray shooting queries, because
such a data-structure can be used to solve Hopcroft's problem. It is
not a priori clear that near linear time algorithm is possible for
this problem.

\SoCGVer{\bigskip}%
\myparagraph{Our results.} %
We show that the convexification of disjoint simple polygons in the
plane, of total complexity $N$, can be computed in $O(N \log^2 N)$
time. If the polygons are not disjoint, the running time becomes
$O(N \alpha(N) \log^2 N)$, where $\alpha$ is the inverse Ackermann
function.

To this end, we prove that the convexification process is well
defined, and as stated above, has a unique result. Next, we use a
data-structure of Ishaque \etal \cite{ist-sprdp-12} to perform ray
shooting to decide if the convex-hulls of polygons intersect. This
data-structure inserts the rays that it shoots into the scene, thus
avoiding the pitfalls of the standard ray-shooting data-structures,
resulting in near linear time for the queries performed. To keep track
of the convex-hulls as they are being merged, we introduce a
data-structure for dynamic maintenance of convex-hulls that behaves
like pseudo-disks (which is an invariant of the algorithm).

Because the convexification process is somewhat inconstant, the
analysis and the algorithm requires some care.

\SoCGVer{\bigskip}%
\myparagraph{Previous work.} %
Barba \etal \cite{bbbs-ccpf-13} derived a very similar result for a
collection of disjoint trees in the plane.

\SoCGVer{\bigskip}%
\myparagraph{Paper organization.} %
We describe the algorithm in \secref{algorithm}.  In \secref{c:good}
we prove that convexification is well defined, and prove some basic
properties we would need to analyze the algorithm.  The analysis of
the algorithm itself is in \secref{analysis}. In \secref{result} we
present the main result, and show how to extend it to a general set of
segments.  We describe the data-structure for maintaining
convex-hulls, under intersection and merge operations, in
\secref{c:h:datastructure}.

\section{Algorithm}
\seclab{algorithm}

The input is a set $\IP$ of $n$ simple polygons that are disjoint,
with a total of $N$ vertices. We assume that all the vertices of $\IP$
are in general position (i.e., no three of them lie on a common line).

\subsection{Data-structures used by the algorithm}
\subsubsection{A data-structure for maintaining convex-hulls} %
\seclab{dynamic:c:h}

We need a data-structure for maintaining a set of convex polygons that
supports merge and intersection detection.  It is well known how to
maintain convex-hulls under insertion of points, with $O( \log n)$
time per insertion, see \cite[Section 3.3.6]{ps-cgi-85}. We need a
slightly more flexible data-structure that supports also intersection
detection, similar in spirit to the data-structure of Dobkin and
Kirkpatrick \cite{dk-dsppu-90}.

\begin{lemma}
    \lemlab{data_structure}%
    The input is a set of convex polygons with total complexity
    $N$. One can preprocess them in linear time, such that the
    following operations are supported.
    \begin{compactenumI}
        \smallskip%
        \item Decide, in $O( \log N)$ time, if a query point is inside
        a specified polygon in the set.
        
        \smallskip%
        \item Compute, in $O( \log N)$ time, if two specified convex
        polygons in the set intersect, and if so return a point in
        their common intersection.
        
        \smallskip%
        \item Compute, in $O( \log N)$ time, the segment of
        intersection between a specified convex polygon in the set,
        and a query line.
        
        \smallskip%
        \item Given two convex-polygons $\CPoly_1$ and $\CPoly_2$ in
        the set, such that their boundaries intersect at most twice,
        compute, in $O( (1+u)\log N)$ time, the convex polygon
        $\CPoly = \CHX{\CPoly_1\cup\CPoly_2}$, which replaces
        $\CPoly_1$ and $\CPoly_2$ in the set of polygons. Here, $u$ is
        the number of vertices of $\CPoly_1$ and $\CPoly_2$ that do
        not appear in $\CPoly$ (i.e., the number of vertices that were
        deleted).
    \end{compactenumI}
\end{lemma}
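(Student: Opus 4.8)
The plan is to attach to each convex polygon of the set two linked representations: (a) a balanced search tree holding its vertices in cyclic order and supporting neighbor queries, \splitop{}, and \joinop{} in $O(\log N)$ time, as in the dynamic convex-hull structure of \cite[Section 3.3.6]{ps-cgi-85}; and (b) a Dobkin--Kirkpatrick hierarchy \cite{dk-dsppu-90} --- a sequence of $O(\log N)$ nested convex polygons, the given polygon at the top, each obtained from the previous by deleting an independent set of a constant fraction of its vertices. Both are built in time linear in the polygon's size, so the total preprocessing is $O(N)$.

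Operations (I) and (III) are a single descent of the hierarchy (equivalently, one binary search in the cyclic tree): this locates a query point with respect to the polygon, or finds the at most two edges of the polygon crossed by a query line, in $O(\log N)$ time. For operation (II), run the Dobkin--Kirkpatrick separation procedure on the two hierarchies; in $O(\log N)$ time it returns either a separating line, or a vertex of one polygon lying in the other, or a transversal crossing of the two boundaries, from which a common point --- and, in the crossing case, the two boundary crossing points $a$ and $b$ --- is obtained with $O(1)$ further queries of type (I) and (III).

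Operation (IV) is the crux. Using (I) on a vertex of each polygon together with (II), first decide whether one of $\CPoly_1, \CPoly_2$ contains the other; if so, keep the container and discard the other polygon, all $u$ of whose vertices are deleted, in $O(\log N)$ time. Otherwise $\partial\CPoly_1$ and $\partial\CPoly_2$ cross transversally at exactly two points $a, b$; the chord $ab$ splits $\partial\CPoly_1$ into an arc lying inside $\CPoly_2$ (the \emph{inner} arc) and a complementary outer arc, and likewise $\partial\CPoly_2$. \splitop{} the cyclic tree of $\CPoly_1$ at $a$ and $b$, discard the subtree of the inner arc, do the same for $\CPoly_2$, and \joinop{} the two surviving outer arcs, inserting $a$ and $b$, into one cyclic sequence; this sequence is convex except at the two reflex vertices $a$ and $b$. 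Convexify it by the Graham-scan repair localized at $a$ and at $b$ --- repeatedly delete from the tree the neighbor of the current reflex vertex whenever it lies on or inside the segment through its two neighbors; this creates the at most two \emph{bridges} of $\CHX{\CPoly_1 \cup \CPoly_2}$ and removes precisely the deleted vertices that were not on the inner arcs. Each deletion is an $O(\log N)$-time tree update charged to one deleted vertex, so with the $O(\log N)$ \splitop{}/\joinop{} operations this takes $O((1+u)\log N)$ time; the result is the cyclic tree of $\CPoly$. It remains to produce a Dobkin--Kirkpatrick hierarchy for $\CPoly$: reuse the levels of the hierarchies of $\CPoly_1$ and $\CPoly_2$ on the parts of the outer arcs untouched by the repair, and fold the bridge vertices together with the $O(u)$ vertices near $a$, near $b$, and on the discarded inner arcs into the remaining levels, again in $O((1+u)\log N)$ time.

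The main obstacle is exactly this hierarchy-rebuilding step. One must show that, after the surgery (deleting two arcs and adding up to two bridges), the Dobkin--Kirkpatrick invariants --- at every level the removed corner set is independent and the polygon shrinks by a constant factor --- can be re-established while touching only $O(u)$ vertices per level and without the fix cascading across levels, so that the cost stays $O((1+u)\log N)$ rather than proportional to $|\CPoly|$; this is the engineering that makes the structure "similar in spirit to \cite{dk-dsppu-90}". A secondary point that needs care, even under the general-position assumption on the input vertices, is robustly identifying which arc of each $\partial\CPoly_i$ is the inner one and handling degenerate configurations --- a bridge collapsing to a point, or $a$ or $b$ coinciding with a bridge endpoint --- and the "$+1$" in the bound absorbs the borderline cases in which the boundaries cross but essentially no vertex is deleted.
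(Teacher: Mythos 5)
There is a genuine gap, and you have in fact named it yourself: the Dobkin--Kirkpatrick hierarchy you attach to each polygon for operation~(II) cannot be maintained under operation~(IV) within the stated time bound, and your proposal offers no argument that it can. The hierarchy's invariants (each level a convex polygon nested in the previous one, obtained by deleting an independent set comprising a constant fraction of the vertices) are \emph{global} properties of the cyclic vertex sequence. After splicing the two surviving outer arcs of $\CPoly_1$ and $\CPoly_2$ together, the restrictions of the two old hierarchies to those arcs do not assemble into nested convex polygons at all (the lower levels of one polygon's hierarchy have nothing to do with the other's, and the two hierarchies need not even have the same depth), the constant-fraction shrinkage can fail, and there is no reason the repair should be confined to $O(u)$ vertices per level rather than cascading through the whole polygon. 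Since $u$ can be $0$ while $|\CPoly|=\Theta(N)$, "rebuild the hierarchy" is not an option, and asserting the fix costs $O((1+u)\log N)$ is precisely the claim that needed proof. Everything else in your write-up --- the chain trees with \splitop{}/\joinop{}, the surgery at the two crossing points, the localized Graham-scan repair charged to deleted vertices --- is sound and close to what the paper does for~(IV); the unsupported step is the one auxiliary structure you introduced.

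The paper avoids the problem by not using a hierarchy at all. Each chain is stored only in a balanced search tree of edges, and the intersection test of~(II) is a recursive descent on the two trees themselves: at a node $v$ one computes in $O(1)$ an inner approximation $\IAX{v}$ (the hull of the node's edge and the extreme edges of its subtree) and an outer approximation $\OAX{v}$ (four halfspaces), whose difference is two triangles; the recursion follows the at most one relevant triangle on each side, giving $O(\log N)$ time. Because the "hierarchy" is implicit in the tree structure, it is automatically correct after any \splitop{}, \joinop{}, insertion or deletion, so operation~(IV) needs no rebuilding step. If you want to keep your architecture, you must either replace the DK hierarchy by such a tree-implicit scheme, or supply the missing maintenance argument --- which, as far as I can see, does not go through for the standard DK construction.
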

Building the above data-structure is relatively a straightforward
modification of known techniques, and it is described in
\secref{c:h:datastructure}.

\subsubsection{Ray shooting data-structure.}

Ray shooting on a set of polygons is expensive in general, since using
$n$ ray shootings one can solve Hopcroft's problem, which is believed
\cite{e-nlbhp-96} to require $\Omega( n^{4/3})$ time.  Fortunately, we
can use a ray-shooting data-structure of Ishaque \etal
\cite{ist-sprdp-12} -- this data-structure shoots permanent rays, that
are added to scene after they are being shot.  When shooting $O(N)$
rays, using this data-structure, on an initial scene made out of $n$
polygons with total complexity $N$, the total running time of this
data-structure is $O( N \log^2 N)$ time.

\subsection{Algorithm in detail}

The algorithm is illustrated in \figref{alg:exec}.

\SoCGVer{\bigskip}%
\myparagraph{Initialization.} %
The algorithm initializes the ray shooting data-structure $\DSRay$
described above for the input polygons.  Next, the algorithm computes
the convex-hull of each of the input polygons. All these convex
polygons are stored in the data-structure $\DSCH$ of
\lemref{data_structure} for maintaining convex-hulls. All the new
edges introduced when computing the convex-hulls are stored in a FIFO
queue $\queue$. In addition, the algorithm initializes a union-find
data-structure $\DSUF$, where each input polygon is an element. A set
in the union-find data-structure $\DSUF$ represents a set of input
polygons that were fused together into a larger convex polygon. This
convex-polygon would be one the convex polygons maintained by
$\DSCH$. Every input polygon $\Poly$ sets its label to be itself; that
is, $\forall \Poly \in \IP$, $\labelX{\Poly} = \Poly$. %

\newcommand{\miniG}[1]{%
   \begin{minipage}{0.3\linewidth}
       \smallskip%
       #1%
       \medskip
   \end{minipage}%
}%
\begin{figure}[p]
    \hfill%
    \begin{tabular}{c|c|c}  
  {\includegraphics[page=1]{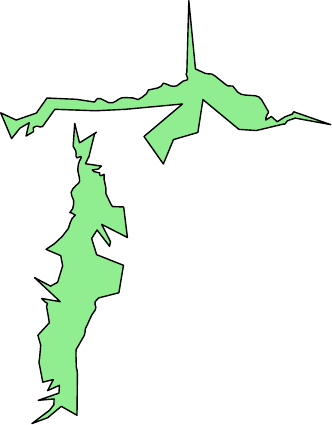}}
  &
    \includegraphics[page=2]{figs/algorithm}
  &
    {\includegraphics[page=3]{figs/algorithm}}
  \\
  \miniG{The input polygons.}
  &
    \miniG{  The convex-hulls of the input polygons.}
  & \miniG{The edges in the initial queue.}\\
  \hline
  \hline%
  ~&~&
  \\
  \includegraphics[page=4]{figs/algorithm}
  &
    \includegraphics[page=5]{figs/algorithm}
  \\
  \miniG{Ray shooting till the first collision. The two new blue edges
  are added to the queue.}
  &
    \miniG{The final set of ray shootings performed.}
  &
    \end{tabular}%
    \hfill\phantom{}%
\caption{Illustration of the algorithm execution.}
\figlab{alg:exec}
\end{figure}

\SoCGVer{\bigskip}%
\myparagraph{Execution.} %
Using the ray-shooting data-structure $\DSRay$ the algorithm traces
the edges stored in $\queue$, in a FIFO fashion.  Specifically, the
algorithms pops the edge $\edge$ in the front of $\queue$. The
endpoints $\pa_1$ and $\pa_2$ of $\edge$ belong two input polygons
$\pa_1 \in \Poly_1$ and $\pa_2 \in\Poly_2$, respectively (it is
possible that $\Poly_1 = \Poly_2$). The algorithm shoots a ray $\ray$
from $\pa_1$ towards $\pa_2$ using $\DSRay$ (here
$\labelX{\ray} = \Poly_1$).  The ray $\ray$ must hit something, and
let $\Poly$ be the entity being hit (if the ray arrives to $\pa_2$,
then $\Poly=\Poly_2$). Here $\Poly$ might be an original input
polygon, or a segment that is the trace of an older ray shooting.

The algorithm computes, using $\DSUF$, the two sets $X$ and $X'$ that
contains $\labelX{\Poly_1}$ and $\labelX{\Poly}$.  If $X \neq X'$,
then the ray $\ray$ hit a different connected component than its own.
Next the algorithm retrieves the two convex polygons $\CPoly$ and
$\CPoly'$ in $\DSCH$ that corresponds to $X$ and $X'$, respectively.
The algorithm replaces $\CPoly$ and $\CPoly'$ in $\DSCH$ by
$\CHX{ \CPoly \cup \CPoly'}$. The new convex-hull has two new edges,
and these two edges are pushed into $\queue$. The algorithm merges $X$
and $X'$ into a larger set in $\DSUF$.

The algorithm now continues to the next edge in $\queue$. This
stage ends when the queue $\queue$ is empty.

\SoCGVer{\bigskip}%
\myparagraph{Cleanup stage.} %
The above results in a collection of convex polygons $\CP$ all with
disjoint boundaries. It is still possible that some polygons are
contained inside some other polygons, but this can be readily handled
in $O( N \log N)$ time by doing sweeping -- which would remove all the
redundant inner polygons.

\section{Properties of convexification}
\seclab{c:good} \seclab{forest}

The algorithm execution results in a convexification of the input
polygons. However, the basic convexification process itself is not
uniquely defined, and there are many different executions of a
convexification process. As such, here we prove that the
convexification process always results in the same set of convex
polygons.

\subsection{Preliminaries}

In the following, $\IP$ denotes the input set of $n$ disjoint simple
polygons.  For a set of polygons $\UP$, let
$\CHX{\UP} = \CHX{ \cup_{\Poly \in \UP}^{} \Poly}$ be the combined
convex hull of all the polygons of $\UP$.

\paragraph*{Basic properties of convex-hulls}

\begin{lemma}
    For any two sets $X, Y \subseteq \Re^2$, we have
    \begin{math}
        \CHX{ X \cup Y } = \CHX{ \CHX{X} \cup \CHX{Y} \bigr.}.
    \end{math}
\end{lemma}
\begin{proof}
    Let $\cF = \CHX{ X \cup Y }$ and
    $\cG = \CHX{ \CHX{X} \cup \CHX{Y} \bigr.}$.  Clearly,
    $\cF \subseteq \cG$. As for the other direction, consider any
    point $\pa \in \cG$. By \Caratheodory theorem, there exists a set
    $\PB = \{\pb_1, \pb_2. \pb_3\} \subseteq \CHX{X} \cup \CHX{Y}$,
    such that $\pa \in \CHX{ \PB \bigr.}$. Applying \Caratheodory
    theorem again, we have that there are sets
    $\PB_i\subseteq X \cup Y$, such that $\pb_i \in \CHX{\PB_i}$, for
    $i=1,\ldots,3$. As such,
    $\pa \in \CHX{ \PB \bigr.} \subseteq \CHX{ \PB_1 \cup \PB_2 \cup
       \PB_3} \subseteq \CHX{ X \cup Y}$. This implies that
    $\pa \in \cF$, and as such $\cG \subseteq \cF$.
\end{proof}

\paragraph*{The convexification process} %

\begin{defn}
    \deflab{convexification} %
    Given a set of polygons $\IP$, consider a process that starts with
    a partition $\PPart_0$ of $\IP$, where every polygon of $\IP$ is a
    singleton. A \emphi{convexification} $\CNF$ of $\IP$ is now a
    sequence of a partitions
    $\PPart_0, \PPart_1, \PPart_2, ..., \PPart_{t}$. For $i>0$, we
    have
    \begin{compactenumi}
        \smallskip%
        \item there are two sets of polygons
        $\PSet_{i-1},\PSet_{i-1}' \in \PPart_{i-1}$ such that
        $\CHX{\PSet_{i-1}}$ and $\CHX{\PSet_{i-1}'}$ intersect (i.e.,
        $\PSet_{i-1}$ and $\PSet_{i-1}'$ are not separable).
        
        \smallskip%
        \item
        $\PPart_i = \pth{\PPart_{i-1} \setminus \{ \PSet_{i-1},
           \PSet_{i-1}'\}} \cup \brc{\CHX{ \PSet_{i-1} \cup
              \PSet_{i-1}'}}$.
    \end{compactenumi}
    \smallskip%
    Furthermore, for any two sets of polygons
    $\PSet_{i}, \PSet_{j} \in \PPart_{t}$, we have $\CHX{\PSet_{i}}$
    and $\CHX{\PSet_{j}}$ do not intersect.
\end{defn}

\paragraph*{Representing convexification as a forest}

Given a convexification $\CNF$ of $\IP$, it can be represented as a
forest $\Forest$ of binary reverse trees (similar in spirit, but not
in details, to the trees used internally by the disjoint union-find
data-structure).  Initially, each input polygon $\Poly \in \IP$ has a
leaf node $u$ that it corresponds to, such that
$\polysX{u} = \{\Poly\}$.

Each internal node $v$ of $\Forest$ corresponds to a subset of
$\polysX{v} \subseteq \IP$ that was created by the convexification
process. Specifically, if two subsets $\polysX{x}$ and $\polysX{y}$
were merged by the process to form $\polysX{v}$, then $x$ and $y$ are
the two children of $v$ in the forest, and
$\polysX{v} = \polysX{x} \cup \polysX{y}$.

A set $\UP \subseteq \IP$ is a \emphi{connected component} of a forest
$\Forest$, if there is a root node $u$ of a tree in $\Forest$, such
that $\polysX{u} = \UP$.

\subsection{Convexifications results in unique %
   cover by convex polygons}

\begin{lemma}
    \lemlab{the:same}%
    Let $\IP$ be a set of disjoint simple polygons.  The final
    partition of a convexification $\CNF$ of $\IP$ is unique. In other
    words, the set of convex polygons forming $\CNF$ depends only on
    the input polygons in $\IP$, and not on the order in which the
    convexification was performed.
\end{lemma}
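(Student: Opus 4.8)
The plan is to show that any two convexifications of $\IP$ produce mutually refining partitions, which—since refinement is a partial order on partitions—forces the two final partitions to coincide. Throughout I identify a block of a partition with the subset of $\IP$ it represents (as in the forest picture), and write $\CHX{D}$ for the convex hull of a block $D$. The argument rests on two simple facts. First, \emph{monotonicity of the process}: a convexification only ever merges blocks, never splits one, so if $\PPart_0,\PPart_1,\ldots,\PPart_t$ is a convexification then every block of $\PPart_i$ is contained in a block of $\PPart_{i+1}$; in particular every block occurring throughout the process is a subset of $\IP$. Second, \emph{monotonicity of the hull}: if $D\subseteq D'$ then $\CHX{D}\subseteq\CHX{D'}$.

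The heart of the argument is the following claim: \emph{let $\CNF'$ be any convexification of $\IP$ with final partition $\PPart'$; then for every convexification $\CNF$ of $\IP$ with partition sequence $\PPart_0,\ldots,\PPart_s$, and for every $i$, each block of $\PPart_i$ is contained in a block of $\PPart'$.} I would prove this by induction on $i$. The base case $i=0$ is immediate, since each block of $\PPart_0$ is a singleton $\{\Poly\}$ for some $\Poly\in\IP$, which lies in whichever block of $\PPart'$ contains $\Poly$. For the inductive step, $\PPart_i$ is obtained from $\PPart_{i-1}$ by replacing two blocks $A$ and $B$ with $\CHX{A\cup B}$, where $\CHX{A}\cap\CHX{B}\neq\emptyset$ (item (i) of \defref{convexification}). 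By the induction hypothesis there are blocks $A^\ast,B^\ast\in\PPart'$ with $A\subseteq A^\ast$ and $B\subseteq B^\ast$, so by hull monotonicity $\CHX{A^\ast}\cap\CHX{B^\ast}\supseteq\CHX{A}\cap\CHX{B}\neq\emptyset$. Since $\PPart'$ is a \emph{final} partition, distinct blocks of $\PPart'$ have disjoint convex hulls, and therefore $A^\ast=B^\ast$; hence $A\cup B\subseteq A^\ast$, so the new block of $\PPart_i$ sits inside a block of $\PPart'$. Every other block of $\PPart_i$ already appears in $\PPart_{i-1}$ and is handled by the induction hypothesis.

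Applying the claim with $\PPart_i$ taken to be the final partition $\PPart$ of $\CNF$ shows that $\PPart$ refines $\PPart'$; swapping the roles of $\CNF$ and $\CNF'$ shows $\PPart'$ refines $\PPart$. Now "$\mathcal{P}$ refines $\mathcal{Q}$" is a partial order on partitions of $\IP$: it is transitive, and antisymmetric because if $D\in\mathcal{P}$ lies in $D'\in\mathcal{Q}$, and $D'$ in turn lies in $D''\in\mathcal{P}$, then $D\subseteq D''$ with $D,D''$ blocks of the same partition and $D$ nonempty forces $D=D''$, hence $D=D'=D''$. Thus mutual refinement yields $\PPart=\PPart'$, which is exactly the statement of the lemma.

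I do not anticipate a genuine obstacle; the only point requiring care is the bookkeeping in the inductive step—keeping firmly in mind that the process never splits a block (so the comparison is always between partitions of the same ground set $\IP$), and that the terminal condition of \defref{convexification} is precisely "distinct blocks have disjoint convex hulls," which is what lets two intersecting sub-blocks be forced into a common block of the other convexification. One could alternatively phrase this as termination plus local confluence and invoke Newman's lemma, but the direct induction above is shorter and sidesteps the accompanying case analysis.
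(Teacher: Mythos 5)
Your proposal is correct and follows essentially the same route as the paper: an induction showing that every block arising in one convexification is contained in a block of the other, driven by the observation that intersecting hulls of contained sets force a common container, followed by a symmetry argument. The only cosmetic difference is that you compare each intermediate partition directly against the other convexification's \emph{final} partition (whose blocks have pairwise disjoint hulls), whereas the paper phrases the induction over nodes of the merge forests and passes to the root of a common tree; both yield mutual refinement and hence equality.
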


\begin{proof}
    Consider two convexifications of $\IP$, $\CNF$ and $\CNF'$, with
    merge forests $\Forest$ and $\Forest'$, respectively. We claim
    that the connected components of $\Forest$ and $\Forest'$ are the
    same.
    
    We prove by induction that for any node $u$ of $\Forest$, there
    exists a node of $u'$ in $\Forest'$, such that
    $\polysX{u} \subseteq \polysX{u'}$.  When $|\polysX{u}| = 1$, the
    claim is immediate, as all polygons appear as singletons in the
    beginning of the process, and there is a leaf of $u'$ of
    $\Forest'$ that stores the same polygon as $u$.

    So, assume inductively that this holds for all nodes $u$ in
    $\Forest$ with $\cardin{\polysX{u}} < N$. Now consider a node $x$
    in $\Forest$ such that $\cardin{\polysX{u}} = N > 1$. It has two
    children $y,z$ in the forest $\Forest$, such that
    $\polysX{u} = \polysX{x} \cup \polysX{y}$ and
    $\CHX{\polysX{x}} \cap \CHX{\polysX{y}} \neq \emptyset$.
    Furthermore, $\cardin{\polysX{x}} < N$ and
    $\cardin{\polysX{y}} < N$. By induction, there are nodes $x', y'$
    in $\Forest'$, such that $\polysX{x} \subseteq \polysX{x'}$ and
    $\polysX{y} \subseteq \polysX{y'}$.  As such,
    $\CHX{ \polysX{x} } \subseteq \CHX{ \polysX{x'} }$ and
    $\CHX{ \polysX{y} } \subseteq \CHX{ \polysX{y'} }$. As such,
    \begin{equation*}
        \CHX{\polysX{x'}} \cap \CHX{\polysX{y'}}%
        \supseteq%
        \CHX{\polysX{x}} \cap \CHX{\polysX{y}}%
        \neq \emptyset.
    \end{equation*}
    But this implies that $x'$ and $y'$ must belong to the same tree
    of $\Forest'$. Let $u'$ be the root of their common tree in
    $\Forest'$, and observe that
    \begin{math}
        \polysX{u'}%
        \supseteq%
        \polysX{x'} \cup \polysX{y'}%
        \supseteq%
        \polysX{x} \cup \polysX{y} = \polysX{u}.%
    \end{math}
    Implying the claim.
    
    Applying the claim symmetrically, we get that any connected
    component of $\Forest$ is contained in a connected component of
    $\Forest'$, and vice versa. This implies that the connected
    components of the two forests are the same, and as such their
    convexifications are the same.
\end{proof}

\subsection{More properties of convexification}

\begin{defn}
    A set $\UP$ of disjoint simple polygons in the plane is
    \emphi{separable} if there exists a line $\ell$ which does not
    pass through any polygon in $\UP$, and there are polygons on both
    sides of $\ell$. Otherwise, the set $\UP$ is \emphi{tight} -- that
    is, any line $\ell$ either intersects a polygon of $\IP$, or all
    polygons of $\IP$ lie to one side of $\ell$.
\end{defn}

\begin{lemma}
    \lemlab{tight:line} %
    Consider a tight set $\QP$ of disjoint polygons. A line $\Line$
    intersects $\CHX{\QP}$ $\iff$ $\Line$ intersects some polygon in
    $\QP$.
\end{lemma}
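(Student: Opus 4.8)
The plan is to prove the two directions separately, with the forward ($\Rightarrow$) direction being the only one that uses tightness.

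The reverse direction is immediate: if $\Line$ meets some polygon $\Poly\in\QP$, then, since $\Poly\subseteq\CHX{\QP}$, the line $\Line$ meets $\CHX{\QP}$; no properties of $\QP$ are needed here.

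For the forward direction I would argue by contraposition: assume $\Line$ misses every polygon of $\QP$, and show that it misses $\CHX{\QP}$. Each polygon $\Poly\in\QP$ is connected (a simple polygon, viewed as a closed region, is connected) and disjoint from $\Line$; since $\Re^2\setminus\Line$ is the disjoint union of two open half-planes, $\Poly$ lies entirely in one of them. If two polygons of $\QP$ were to lie in different open half-planes of $\Line$, then $\Line$ would be a line that misses every polygon yet has polygons of $\QP$ on both sides --- that is, $\QP$ would be separable, contradicting the hypothesis that $\QP$ is tight. Hence all polygons of $\QP$ lie in a single open half-plane $H$ bounded by $\Line$. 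Since $H$ is convex and contains $\bigcup_{\Poly\in\QP}\Poly$, it also contains $\CHX{\QP}=\CHX{\bigcup_{\Poly\in\QP}\Poly}$; as $H\cap\Line=\emptyset$, we conclude $\CHX{\QP}\cap\Line=\emptyset$, which is the contrapositive of the claim.

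The only mild subtlety, and the single place where tightness is invoked, is the step forcing all polygons onto the same side of $\Line$; everything else is the elementary fact that a convex set containing a family of sets contains their convex hull, together with the observation that a connected set disjoint from a line lies in one open half-plane. The degenerate case $\QP=\emptyset$ is vacuous, since then $\CHX{\QP}=\emptyset$ and $\QP$ contains no polygon. I do not anticipate any real obstacle here: once the definition of ``tight'' is unwound correctly, the argument is short.
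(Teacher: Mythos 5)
Your proposal is correct and follows essentially the same route as the paper: the reverse direction is containment, and the forward direction is the observation that a line missing every polygon of a tight set must have all polygons (and hence their convex hull) strictly on one side, else the line would separate $\QP$. You merely spell out, via contraposition, the connectivity and convexity details that the paper's one-line contradiction leaves implicit.
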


\begin{proof}
    If a line $\Line$ intersects some polygon in $\QP$ then it
    definitely intersects the larger set $\CHX{\QP}$.  As for the
    other direction, if there is a line $\Line$ that intersects
    $\CHX{\QP}$ but none of the polygons of $\QP$, then $\Line$
    separates $\QP$, which is a contradiction.
\end{proof}

\begin{lemma}
    \lemlab{tight:union}%
    If $\IP_1, \IP_2 \subseteq \IP$ are tight, and $\CHX{\IP_1}$
    intersects $\CHX{\IP_2}$, then $\IP_1 \cup \IP_2$ is tight.
\end{lemma}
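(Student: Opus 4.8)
The plan is to argue by contradiction. Suppose $\IP_1 \cup \IP_2$ is separable, so there is a line $\Line$ that avoids every polygon of $\IP_1 \cup \IP_2$ and has polygons of $\IP_1 \cup \IP_2$ on both of its (open) sides. Since $\Line$ misses every polygon in $\IP_1$ and in $\IP_2$, each such polygon lies strictly inside one of the two open half-planes bounded by $\Line$. The proof then splits according to how $\IP_1$ and $\IP_2$ distribute across these half-planes, and in every case we derive a contradiction.

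First I would dispose of the case where $\IP_1$ has polygons on both sides of $\Line$: then $\Line$ avoids all polygons of $\IP_1$ and witnesses polygons of $\IP_1$ on both sides, so $\IP_1$ is separable, contradicting that $\IP_1$ is tight; the case where $\IP_2$ straddles $\Line$ is symmetric. So we may assume all polygons of $\IP_1$ lie strictly on one side of $\Line$, and likewise all polygons of $\IP_2$ lie strictly on one side. If they lie on the same side, then no polygon of $\IP_1 \cup \IP_2$ is on the other side of $\Line$, contradicting that $\Line$ separates $\IP_1 \cup \IP_2$. The only remaining case is that $\IP_1$ lies entirely in one open half-plane $H$ and $\IP_2$ entirely in the complementary open half-plane. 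Since an open half-plane is convex, $\CHX{\IP_1} \subseteq H$ and $\CHX{\IP_2} \subseteq \Re^2 \setminus \overline{H}$, so $\CHX{\IP_1} \cap \CHX{\IP_2} = \emptyset$, contradicting the hypothesis that the two hulls intersect. Hence $\IP_1 \cup \IP_2$ is tight.

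I do not expect a genuine obstacle here; the only things to handle with care are the bookkeeping of ``both sides'' — namely, observing that the separating line for $\IP_1 \cup \IP_2$ is simultaneously a valid separating line for whichever of $\IP_1$, $\IP_2$ happens to straddle it — and the fact that ``strictly on one side'' is inherited by the convex hull because half-planes are convex. As an alternative for the last case one could invoke \lemref{tight:line}: since $\IP_1$ is tight and $\Line$ misses all its polygons, $\Line$ misses $\CHX{\IP_1}$, and similarly $\CHX{\IP_2}$, so each hull lies in a closed half-plane and then (using that the polygons, hence the hulls, do not touch $\Line$) in disjoint open half-planes; but the direct argument above is shorter.
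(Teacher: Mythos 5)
Your proof is correct and is essentially the paper's argument spelled out in full: the paper's proof is the one-line observation that a line separating $\IP_1 \cup \IP_2$ must either separate $\IP_1$, separate $\IP_2$, or separate $\IP_1$ from $\IP_2$, each impossible by hypothesis, and your four cases are exactly this trichotomy (with the ``same side'' case disposed of as vacuous). No gap; your version just makes explicit the step that disjoint open half-planes force the convex hulls to be disjoint.
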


\begin{proof}
    Any line that separates $\IP_1 \cup \IP_2$, either (i) separates
    $\IP_1$, (ii) separates $\IP_2$, or (iii) separates $\IP_1$ from
    $\IP_2$. All there possibilities are impossible by assumption.
\end{proof}

\begin{lemma}
    Given a convexification $\CNF$ with a forest $\Forest$, for any
    node $u$ of $\Forest$, we have that $\polysX{u}$ is tight.
\end{lemma}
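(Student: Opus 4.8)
The plan is to prove this by induction on the structure of the forest $\Forest$, following exactly the recursive construction of the convexification process. The base case concerns a leaf node $u$ with $\polysX{u} = \{\Poly\}$ for a single input polygon $\Poly$; since $\Poly$ is connected, no line can have parts of $\Poly$ on both sides without intersecting $\Poly$ itself, so a singleton set is trivially tight.

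For the inductive step, consider an internal node $u$ with children $x$ and $y$, so that $\polysX{u} = \polysX{x} \cup \polysX{y}$. By the induction hypothesis, both $\polysX{x}$ and $\polysX{y}$ are tight. Moreover, the convexification process only merges $x$ and $y$ when their convex hulls intersect, i.e., $\CHX{\polysX{x}} \cap \CHX{\polysX{y}} \neq \emptyset$. These are precisely the hypotheses of \lemref{tight:union}, which immediately gives that $\polysX{u} = \polysX{x} \cup \polysX{y}$ is tight. This completes the induction, and hence $\polysX{u}$ is tight for every node $u$ of $\Forest$.

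I expect there to be essentially no obstacle here: the lemma is really just the statement that \lemref{tight:union} can be applied recursively along the forest, with the singleton base case handled by the connectivity of a single polygon. The only minor subtlety worth a sentence is verifying that the sets $\polysX{x}, \polysX{y} \subseteq \IP$ remain sets of \emph{disjoint} simple polygons so that the definitions of separable/tight and \lemref{tight:union} apply verbatim — but this is immediate since every $\polysX{v}$ is just a subset of the original disjoint family $\IP$. One could phrase the whole argument in two sentences, but I would write it as a clean induction to make the dependence on \lemref{tight:union} explicit.
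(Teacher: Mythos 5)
Your proof is correct and is exactly the argument the paper intends: the paper's own proof is the one-line remark that the claim ``follows readily from \lemref{tight:union}'', which implicitly packages the same induction over the forest (singleton leaves are trivially tight, and each internal merge applies \lemref{tight:union} since the process only merges sets whose convex hulls intersect). You have simply written out the induction explicitly, so there is nothing further to add.
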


\begin{proof}
    Follows readily from \lemref{tight:union}.
\end{proof}

\subsection{The result}

\begin{defn}
    \deflab{convex:cover}%
    A \emphi{convex cover} of a set of polygons $\IP$ is a set of
    disjoint convex polygons $\CPoly_1, \ldots, \CPoly_m$, such that
    each polygon $\Poly \in \IP$ is contained in some polygon
    $\CPoly_{j(\Poly)}$. Furthermore, every polygon $\CPoly_i$, in the
    cover, contains at least one of the polygons of $\IP$.
\end{defn}

The \emphi{area} of a convex cover $\ICover$ is
$\areaX{ \ICover}= \sum_{\CPoly \in \ICover} \areaX{\CPoly}$.

\begin{theorem}
    Given a set of disjoint simple polygons $\IP$, any convexification
    of $\IP$, results in a unique convex cover $\CNF$ of
    $\IP$. Furthermore, we have
    \begin{compactenumi}
        \item $\CNF$ is the minimum area convex cover of $\IP$, and 

        \item $\CNF$ is the convex cover of $\IP$ of maximum
        cardinality.
    \end{compactenumi}
\end{theorem}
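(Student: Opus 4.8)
The plan is to establish the theorem in three stages. First, uniqueness of the convex cover produced by convexification is already handed to us by \lemref{the:same}, so the output $\CNF$ is a well-defined set of disjoint convex polygons; by construction each convex polygon in $\CNF$ is the convex hull of a nonempty subset of $\IP$ and the termination condition of \defref{convexification} guarantees the hulls are pairwise non-intersecting, so $\CNF$ is indeed a convex cover in the sense of \defref{convex:cover}. The real content is in parts (i) and (ii), and the key structural fact I would isolate first is: \emph{every connected component $\UP$ of $\Forest$ is tight} (this is the lemma immediately preceding the theorem). Tightness is what forces any competing convex cover to be ``no finer'' than $\CNF$.

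Next I would prove the following comparison lemma, which does the main work: if $\ICover = \{\CPoly_1',\dots,\CPoly_k'\}$ is any convex cover of $\IP$, then each convex polygon $\CPoly \in \CNF$ is contained in a single member of $\ICover$. Suppose not: then $\CPoly = \CHX{\UP}$ for some tight component $\UP$, and the members of $\ICover$ meeting $\CPoly$ would partition $\UP$ into at least two nonempty groups that are separated (since the $\CPoly_i'$ are disjoint convex sets, any two can be separated by a line, and that line cannot pass through any polygon of $\UP$ because each such polygon lies entirely inside one $\CPoly_i'$). This contradicts tightness of $\UP$. Hence the map $\CPoly \mapsto$ (the unique $\CPoly_i'$ containing it) is well defined; it is also surjective onto the members of $\ICover$ that actually cover something (every $\CPoly_i'$ contains some input polygon, which lies in some $\CPoly \in \CNF$, which then lies in $\CPoly_i'$). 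Therefore $|\CNF| \ge |\ICover|$, giving (ii): $\CNF$ has maximum cardinality among convex covers.

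For (i), I would use the same map together with disjointness and convexity. Fix an arbitrary convex cover $\ICover$. Each $\CPoly_i' \in \ICover$ contains the disjoint convex polygons $\{\CPoly \in \CNF : \CPoly \subseteq \CPoly_i'\}$, so by additivity of area $\areaX{\CPoly_i'} \ge \sum_{\CPoly \subseteq \CPoly_i'} \areaX{\CPoly}$. Summing over $i$ and using that the sets $\{\CPoly \in \CNF : \CPoly \subseteq \CPoly_i'\}$ partition all of $\CNF$ (well-definedness of the map plus surjectivity), we get $\areaX{\ICover} \ge \areaX{\CNF}$. Combined with the fact that $\CNF$ is itself a convex cover, this shows $\CNF$ attains the minimum area, proving (i). A small remark I would add: one should note that the convex hull of a nonempty planar set has strictly positive area only if the set is not collinear — but general position of the vertices of $\IP$ rules out any component being collinear, so each $\areaX{\CPoly} > 0$; this is only needed if one wants to argue that ``minimum area'' and ``maximum cardinality'' pick out the \emph{same} cover rather than merely that $\CNF$ achieves both extrema.

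The main obstacle is the comparison lemma, specifically the separation step: I must argue carefully that if two disjoint convex polygons $\CPoly_a', \CPoly_b' \in \ICover$ each intersect $\CPoly = \CHX{\UP}$ and together they (with possibly others) cover $\UP$, then I can actually produce a \emph{line} separating a nonempty sub-collection of $\UP$ from another, witnessing that $\UP$ is not tight. The subtlety is that more than two members of $\ICover$ may intersect $\CPoly$, so I cannot just take the separating line of two of them blindly; I would instead argue by a connectivity/induction argument on the groups $\UP_i = \{\Poly \in \UP : \Poly \subseteq \CPoly_i'\}$, using that the ``intersects'' relation on the relevant $\CPoly_i'$ restricted to those meeting $\CPoly$ must be disconnected (otherwise repeated application of \lemref{tight:union} would force all of $\UP$ into one group and then a single $\CPoly_i' \supseteq \CHX{\UP} = \CPoly$, contradiction), and the disconnection yields the desired separating line since disjoint convex sets on the two sides of the cut can be strictly separated by a line avoiding all of $\UP$. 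Everything else — area additivity, surjectivity of the containment map, and deducing both extremal statements — is routine once this lemma is in place.
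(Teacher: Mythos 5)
Your overall skeleton --- uniqueness from \lemref{the:same}, then the claim that every member of $\CNF$ lies inside a single member of an arbitrary cover $\ICover$, then surjectivity of the containment map for the cardinality bound and disjoint-area additivity for minimality --- matches the paper's proof, and the cardinality and area deductions are correct as you state them. The genuine gap is in your proof of the containment claim, and it sits exactly where you flag ``the main obstacle'': tightness of $\UP$ is \emph{not} enough to force $\UP$ into a single member of $\ICover$, and no separation argument can close this, because a tight set of disjoint polygons \emph{can} be covered by three pairwise disjoint convex polygons. Pinwheel example: take the three sides of an equilateral triangle and slide each along its own line so that it overshoots one corner and stops short of the other, cyclically; thicken each into a very thin polygon. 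The three polygons $s_1,s_2,s_3$ are pairwise disjoint with pairwise disjoint convex hulls, yet for each $i$ the hull $\CHX{s_j \cup s_k}$ of the other two contains part of $s_i$, so no line separates any one from the other two: the set is tight, but $\{\CHX{s_1},\CHX{s_2},\CHX{s_3}\}$ is a convex cover with no member containing $\CHX{s_1\cup s_2\cup s_3}$. Your proposed fix --- ``disjoint convex sets on the two sides of the cut can be strictly separated by a line avoiding all of $\UP$'' --- is precisely the false statement here; note also that the ``intersects'' relation on the $\CPoly_i'$ is vacuously edgeless, since members of a cover are disjoint by \defref{convex:cover}, so the ``disconnection'' carries no information.

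The repair is to use the merge forest rather than tightness. Induct bottom-up over $\Forest$: a leaf's polygon lies in some member of $\ICover$; for an internal node $v$ with children $x,y$, the hulls $\CHX{\polysX{x}}$ and $\CHX{\polysX{y}}$ intersect (that is why the merge happened), so the members of $\ICover$ containing them have a common point and hence coincide, and that single convex member then contains $\CHX{\polysX{x}\cup\polysX{y}} = \CHX{\polysX{v}}$. This two-line induction replaces your separation argument entirely, and everything downstream in your write-up then goes through unchanged. (For what it is worth, the paper's own one-sentence justification of this step also invokes only tightness and is open to the same objection; the forest induction is the argument that actually works. The pinwheel configuration does not contradict the theorem itself, since there the convexification never merges anything and $\CNF$ is the three polygons --- it only shows that tightness is the wrong hypothesis for the containment claim.)
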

\begin{proof}
    The uniqueness follows readily from \lemref{the:same}.
    
    Consider a tight set of polygons $\UP \subseteq \IP$. Any convex
    cover $\ICover$ of $\IP$, must have a single convex polygon
    $\Poly$, such that $\CHX{\UP} \subseteq \Poly$. Since, each
    polygon of $\CNF$ is the convex hull of a tight set, it follows
    that each of the polygons of $\CNF$ are contained in some polygon
    of $\ICover$, which readily implies the above.
\end{proof}

\section{Analysis of the algorithm}
\seclab{analysis}

\subsection{On the partitions maintained by the %
   algorithm}

Here, we define and prove some invariant properties of the partition
of the input polygons maintained by the algorithm.  The key property
needed for the algorithm to work is that for any two sets of polygons that
are being merged together, their corresponding convex-hulls behave
like pseudo-disks. See \lemref{compatible:intersection} below for
details.

\begin{defn}
    Consider a set $\QP \subseteq \IP$ of polygons, and a set
    $\SegSet$ of interior disjoint close segments. The set $\QP$ is
    \emphi{bridgeable} by $\SegSet$, if
    \begin{compactenumi}
        \item $\bigcup \QP \cup \bigcup \SegSet$ is a connected,
        \item the segments of $\SegSet$ and the polygons of $\IP$ are
        interior disjoint,

        \item all the polygons of $\IP$ that the segments of $\SegSet$
        intersect belong to $\QP$,
               
        \item all segments of $\SegSet$ have at least one endpoint on
        the boundary of a polygon in $\QP$, and

        \item all segments of $\SegSet$ lie within $\CHX{\QP}$.
    \end{compactenumi}
    See \figref{bridgeable} for an example.
\end{defn}

\begin{figure}[h]
    \centerline{\includegraphics{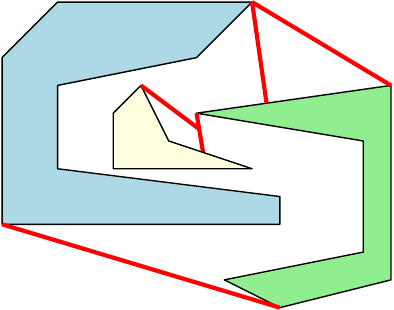}}%
    \caption{A bridgeable set.}
    \figlab{bridgeable}
\end{figure}

\begin{defn}
    \deflab{proper} %
    A partition $\PPart$ of $\IP$ is \emphi{proper} if all the sets in
    $\PPart$ are tight and bridgeable.
\end{defn}

\begin{observation}
    Let $\IP$ be a set of disjoint polygons in the plane.  The initial
    partition $\PPart$ of $\IP$, where each polygon of $\IP$ is a
    singleton, is proper.
\end{observation}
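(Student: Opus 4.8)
The plan is to verify each of the five defining conditions of bridgeability for a singleton set $\QP = \{\Poly\}$, together with tightness, where $\Poly$ is one of the input polygons. Tightness is essentially immediate: a single simple polygon cannot be separated by a line, since separability requires polygons on both sides of the line, and there is only one polygon. For bridgeability, we take the witnessing segment set $\SegSet = \emptyset$. Then conditions (ii), (iii), (iv), and (v) are vacuously true, as each is a universally quantified statement over the empty set of segments (no two segments to be interior disjoint, no polygon met by a segment, no segment needing an endpoint on a boundary, no segment needing to lie in the hull). The only condition with content is (i): $\bigcup \QP \cup \bigcup \SegSet = \Poly \cup \emptyset = \Poly$ must be connected, which holds because a simple polygon (viewed as its boundary, or as the closed region — whichever convention the paper uses) is connected.

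First I would state that it suffices, by \defref{proper}, to show every set in $\PPart$ is tight and bridgeable, and since $\PPart$ consists of singletons $\{\Poly\}$ for $\Poly \in \IP$, I fix an arbitrary $\Poly \in \IP$. Then I would dispatch tightness in one sentence. Next I would exhibit $\SegSet = \emptyset$ as the bridging set and walk through the five conditions, pointing out that four of them are vacuous and that connectivity of $\Poly$ is the remaining (trivial) point. I expect no real obstacle here; the only thing requiring a moment's care is making sure the empty segment set is actually permitted by the definition of bridgeable and that condition (i) is about $\Poly$ alone. If the paper intends $\SegSet$ to be nonempty in some implicit way, one could instead note that a single polygon is trivially connected on its own and handle it directly, but reading the definition, $\SegSet = \emptyset$ is clearly allowed.

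This is the kind of observation whose proof is shorter than its statement, so the "main obstacle" is really just being careful about conventions rather than any mathematical difficulty. The proof below records exactly this reasoning.

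\begin{proof}
    By \defref{proper} it suffices to show that each set in $\PPart$ is tight and bridgeable. Fix any $\Poly \in \IP$; the corresponding set in $\PPart$ is the singleton $\{\Poly\}$.

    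The set $\{\Poly\}$ is tight: separability requires a line with polygons of the set on both sides, but there is only one polygon in $\{\Poly\}$, so no such line exists.

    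The set $\{\Poly\}$ is bridgeable, using the empty set of segments $\SegSet = \emptyset$. Indeed, condition (i) requires $\bigcup \{\Poly\} \cup \bigcup \emptyset = \Poly$ to be connected, which holds since $\Poly$ is a simple polygon. Conditions (ii)--(v) are each universally quantified over the segments of $\SegSet$, and hence hold vacuously since $\SegSet = \emptyset$.

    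Thus every set of $\PPart$ is tight and bridgeable, so $\PPart$ is proper.
\end{proof}
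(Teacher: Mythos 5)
Your proof is correct; the paper states this as an observation with no proof at all, treating it as immediate, and your verification (tightness because a singleton cannot have polygons on both sides of a line, bridgeability via $\SegSet = \emptyset$ with conditions (ii)--(v) vacuous and (i) following from connectivity of a simple polygon) is exactly the intended reasoning. Nothing is missing.
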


\begin{defn}
    \deflab{compatible}%
    Two bridgeable sets $\PP_1, \PP_2 \subseteq \IP$ are
    \emphi{compatible} if
    \begin{compactenumi}
        \item $\PP_1$ and $\PP_2$ are disjoint,
        \item there are sets of segments $\SegSet_1$ and $\SegSet_2$,
        such that $\PP_1$ and $\PP_2$ are bridgeable by $\SegSet_1$
        and $\SegSet_2$, respectively,
        \item the segments of $\SegSet_1$ are disjoint from the
        segments of $\SegSet_2$, and
        \item $\PP_1 \cup \PP_2$ is bridgeable.
    \end{compactenumi}
\end{defn}

\begin{figure}[h]
    \includegraphics[page=1]{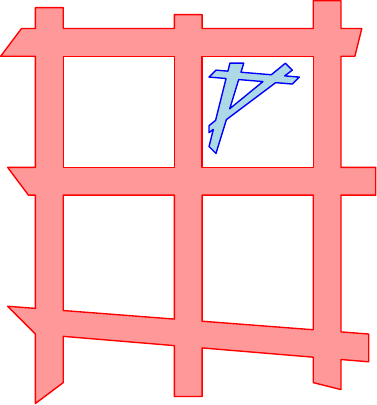}
    \hfill%
    \includegraphics[page=2]{figs/two_polygons}
    \hfill%
    \includegraphics[page=3]{figs/two_polygons}
    \captionof{figure}{}
    \figlab{c:i}
\end{figure}

\begin{lemma}
    \lemlab{compatible:intersection}%
    Let $\PP_1, \PP_2 \subseteq \IP$ be two sets of polygons that are
    compatible, then the boundaries of $\CHX{\PP}$ and $\CHX{\QP}$
    intersects at most twice.
\end{lemma}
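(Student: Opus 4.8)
For each $i\in\{1,2\}$ fix, using compatibility, a segment set $\SegSet_i$ witnessing that $\PP_i$ is bridgeable, and set $W_i=\bigl(\bigcup\PP_i\bigr)\cup\bigl(\bigcup\SegSet_i\bigr)$ and $K_i=\CHX{\PP_i}$. Then $W_i$ is compact (a finite union of closed polygons and segments), it is connected (bridgeability~(i)), and $\bigcup\PP_i\subseteq W_i\subseteq K_i$ by bridgeability~(v); hence $\CHX{W_i}=K_i$, and in particular every vertex of $K_i$ belongs to $W_i$. Moreover $W_1\cap W_2=\emptyset$: the polygons of $\PP_1$ are disjoint from those of $\PP_2$ since $\PP_1\cap\PP_2=\emptyset$ and $\IP$ is disjoint, the segments of $\SegSet_1$ are disjoint from those of $\SegSet_2$ by compatibility~(iii), and a segment of $\SegSet_i$ meets no polygon outside $\PP_i$ by bridgeability~(iii). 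We work under the stated general position, which makes $\partial K_1\cap\partial K_2$ a finite set of transversal crossings; its size is therefore even, and it suffices to rule out $\cardin{\partial K_1\cap\partial K_2}\ge 4$.

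\textbf{A convex-geometric step.} Suppose $\partial K_1$ and $\partial K_2$ cross in $2k\ge 4$ points. Traversing $\partial K_1$, at each crossing it passes from one side of $\partial K_2$ to the other, so $\partial K_1$ decomposes into $k$ maximal arcs lying in $\mathrm{int}(K_2)$ (``inside arcs'') alternating with $k$ maximal arcs lying in the exterior of $K_2$ (``outside arcs''); likewise for $\partial K_2$. The key claim is that each outside arc $\gamma$ of $\partial K_1$ contains a vertex of $K_1$. Otherwise $\gamma$ lies in the relative interior of a single edge $[v,v']$ of $K_1$; by general position the line spanned by this edge meets $\partial K_2$ in at most two points, and these must be the two endpoints $q,q'$ of $\gamma$, so neither of the sub-segments $[v,q]$, $[q',v']$ meets $\partial K_2$ in its interior; since the portions of $\partial K_1$ just past $q$ and $q'$ away from $\gamma$ are inside arcs, it follows that $v,v'\in K_2$, hence $[v,v']\subseteq K_2$ by convexity --- contradicting that the relative interior of $\gamma\subseteq[v,v']$ is exterior to $K_2$. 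Since the vertices of $K_1$ lie in $W_1$, every outside arc of $\partial K_1$ carries a point of $W_1$; symmetrically every outside arc of $\partial K_2$ carries a point of $W_2$. (As distinct outside arcs of $\partial K_1$ bound distinct components of $K_1\setminus K_2=K_1\setminus(K_1\cap K_2)$, this says $W_1$ meets at least two components of $K_1\setminus K_2$, and similarly for $W_2$; this is the obstruction being exploited.)

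\textbf{A planar linking step.} Put $\Omega=K_1\cup K_2$. Since there are $\ge 4$ crossings, $K_1\cap K_2$ has nonempty interior, so $\Omega$, being a union of two convex bodies with common interior points, is a closed topological disk; moreover $\partial\Omega=\bigl(\partial K_1\cap\mathrm{ext}(K_2)\bigr)\cup\bigl(\partial K_2\cap\mathrm{ext}(K_1)\bigr)$ together with the crossing points, i.e.\ $\partial\Omega$ is the simple closed curve obtained by concatenating all outside arcs of $\partial K_1$ and of $\partial K_2$, and at each crossing point it switches between an arc of $\partial K_1$ and an arc of $\partial K_2$, so the two families alternate around $\partial\Omega$. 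Pick a vertex $a_1$ of $K_1$ on one outside arc of $\partial K_1$ and a vertex $a_2$ of $K_1$ on the next outside arc of $\partial K_1$ in cyclic order around $\partial\Omega$; between them on $\partial\Omega$ lies an outside arc of $\partial K_2$, on which choose a vertex $b_1$ of $K_2$, and on the following outside arc of $\partial K_2$ a vertex $b_2$ of $K_2$. These four points lie on $\partial\Omega$ in the cyclic order $a_1,b_1,a_2,b_2$, with $a_1,a_2\in W_1$ and $b_1,b_2\in W_2$. Now $W_1\subseteq K_1\subseteq\Omega$ is a connected compact set containing the linked pair $a_1,a_2$, and $W_2\subseteq K_2\subseteq\Omega$ is a connected compact set containing the linked pair $b_1,b_2$; by the standard fact that inside a disk two disjoint connected closed sets cannot join, respectively, the two pairs of a linked boundary quadruple, $W_1\cap W_2\ne\emptyset$ --- contradicting the first step. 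Hence $\partial K_1$ and $\partial K_2$ meet in at most two points.

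\textbf{Main obstacle.} The geometric heart --- ``every outside arc carries a vertex'' --- is short; the delicate part is the combinatorial-topological bookkeeping of the last two steps: the alternation of the two arc families around $\partial\Omega$, that distinct outside arcs bound distinct components of $K_i\setminus K_{3-i}$, and that the chosen vertices land on $\partial\Omega$ in interleaved cyclic order. The only genuinely topological input is the planar linking lemma, which is elementary for polyhedral sets such as $W_1,W_2$ (e.g., via the Jordan curve theorem), and the general position hypothesis is exactly what excludes tangencies and overlapping edges that would otherwise defeat ``intersects at most twice''.
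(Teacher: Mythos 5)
Your proof is correct and, at the top level, takes the same route as the paper: both arguments replace $\PP_i$ by the disjoint connected set $W_i=\bigcup\PP_i\cup\bigcup\SegSet_i$ and then use the fact that the convex hulls of two disjoint connected (not necessarily simple) polygonal sets behave like pseudo-disks. The only difference is that the paper invokes this fact as a known result (citing \cite{hks-akltd-17}), whereas you prove it from first principles via the ``every outside arc carries a vertex'' observation and the interleaved-boundary-points linking lemma --- essentially the standard proof of the cited statement --- and your execution (including deriving $W_1\cap W_2=\emptyset$ from the bridgeability and compatibility conditions) is sound.
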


\begin{proof}
    Let $\SegSet_1$ and $\SegSet_2$ be the two compatible sets of
    bridges for $\PP_1$ and $\PP_2$, respectively.  Consider the two
    polygons formed by $\PP_1 \cup \SegSet_1$ and
    $\PP_2 \cup \SegSet_2$ (they are not necessarily simple).  It is
    well known that the convex-hulls of two disjoint simple polygons
    behaves like pseudo-disks (see for example \cite[Lemma
    3.2]{hks-akltd-17}). But this holds also for two disjoint polygons
    that are not necessarily simple. Indeed replace each polygon by
    the polygon formed by its outer boundary. The interesting case is
    when one of the resulting polygons is fully contained inside the
    other, but then their boundaries do not intersect at all. See
    \figref{c:i}.
\end{proof}

\subsection{Correctness}

Here, we show that the algorithm indeed computes the convexification
of the input polygons. In particular, we show that during the main
stage of the algorithm, each merge is between two compatible subsets.

In the following, consider the forest $\Forest$, described in
\secref{forest}, that the algorithm (implicitly) maintains over the
input polygons.

For a ray $\ray$, shot by the algorithm, let $\vorigX{\ray}$ be the
node in $\Forest$ that gave rise to it. When the ray $\ray$ is finally
being issued by the algorithm, $\vorigX{\ray}$ might be an internal
node, and let $\currX{\vorigX{\ray}}$ be the current root of the tree
containing $\vorigX{\ray}$ in $\Forest$.

Each connected component of the scene at any point time corresponds to
a set of polygons that is bridgeable -- indeed, the bridges being the
segments formed by the ray shootings.  In particular, the polygons
together with the segments formed by the rays shot so far, partition
the polygons into connected components, where each root node in
$\Forest$ corresponds to one of the resulting connected components.

\begin{observation}
    Consider a segment $\seg$ that was the result of a ray shooting query,
    rising out of a node $v$ of $\Forest$.  Specifically, $\seg$ is
    contained in an edge $\edge$ of $\CHX{ \polysX{v}}$ which was
    created when two convex polygons where merged. The segment $\seg$
    was created by a ray shooting along $\edge$.  We then have that
    $\seg \subseteq \CHX{ \polysX{v}} \subseteq \CHX{
       \polysX{\currX{v}}}$.
\end{observation}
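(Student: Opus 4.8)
The plan is to prove the two inclusions in the chain separately, since neither needs more than the definitions already in place.

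First I would argue $\seg \subseteq \CHX{\polysX{v}}$. By hypothesis $\seg$ lies in an edge $\edge$ of the convex polygon $\CHX{\polysX{v}}$; an edge of a convex polygon is part of its boundary, and a closed convex set contains its boundary, so $\edge \subseteq \CHX{\polysX{v}}$ and hence $\seg \subseteq \CHX{\polysX{v}}$. This step is immediate.

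Next I would establish $\CHX{\polysX{v}} \subseteq \CHX{\polysX{\currX{v}}}$. Since $\currX{v}$ is by definition the root of the tree of $\Forest$ currently containing $v$, there is an ancestor chain $v = w_0, w_1, \ldots, w_k = \currX{v}$ in $\Forest$, with $w_{i+1}$ the parent of $w_i$. By the construction rule of $\Forest$ (see \secref{forest}), the polygon set of a parent is the union of the polygon sets of its two children, so $\polysX{w_i} \subseteq \polysX{w_{i+1}}$ for every $i$; chaining these gives $\polysX{v} \subseteq \polysX{\currX{v}}$. Monotonicity of the convex hull with respect to set inclusion then yields $\CHX{\polysX{v}} \subseteq \CHX{\polysX{\currX{v}}}$. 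Combining the two inclusions gives the claim.

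There is no real obstacle here: the observation is a bookkeeping fact recording that $\polysX{\cdot}$, and hence $\CHX{\polysX{\cdot}}$, only grows as one moves toward the root of a tree of $\Forest$, together with the trivial fact that the bridge segment $\seg$ sits on the boundary of the hull that produced it. If anything needs care it is merely checking that the node $v$ that gave rise to $\ray$ and its current root $\currX{v}$ genuinely lie in the same tree of $\Forest$ at the moment the ray is processed -- but that is exactly how $\currX{\cdot}$ was defined.
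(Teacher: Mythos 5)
Your proof is correct and fills in exactly the reasoning the paper leaves implicit: the paper states this as an unproven observation, and the two facts you spell out --- that $\seg$ lies on an edge, hence on the boundary, of the closed convex set $\CHX{\polysX{v}}$, and that $\polysX{\cdot}$ only grows along the path from $v$ to its current root so convex-hull monotonicity gives $\CHX{\polysX{v}} \subseteq \CHX{\polysX{\currX{v}}}$ --- are precisely why the authors regard it as immediate. No gaps; nothing further is needed.
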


\begin{lemma}
    \lemlab{inv:2}%
    Assume the algorithm computed a sequence of partitions
    $\PPart_0, \PPart_1, ..., \PPart_{m}$, where $\PSet_t, \PSet_{t}'$
    are the two sets of polygons of $\PPart_{t}$ that the algorithm
    merged at time $t$, for $t=0,\ldots, m - 1$. Then, for any $t$, we
    have:
    \begin{compactenumA}
        \smallskip%
        \item $\CHX{\PSet_t}$ intersects $\CHX{\PSet_{t}'}$.
        \smallskip%
        \item $\PSet_t$ and $\PSet_{t}'$ are compatible
        (\defref{compatible}).  \smallskip%
        \item $\PPart_{t + 1}$ is proper -- that is, all the sets of
        $\PPart_{t+1}$ are tight and bridgeable.
    \end{compactenumA}    
\end{lemma}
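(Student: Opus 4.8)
The plan is to prove parts (A)--(C) simultaneously by induction over the \emph{events} (ray shootings) of the algorithm, strengthening the hypothesis to the statement: \emph{throughout the execution the current partition is proper, and each of its sets $\PSet$ is bridgeable by $\SegSet_\PSet$, the set of ray-segments traced so far that belong to $\PSet$ (equivalently, to sub-components of the current component $\PSet$).} The base case is the initial partition $\PPart_0$: each set is a singleton $\{\Poly\}$ with $\SegSet_{\{\Poly\}}=\emptyset$, a single polygon is tight, and $(\{\Poly\},\emptyset)$ is trivially bridgeable (its only non-vacuous requirement, connectivity of $\Poly$, is immediate). At a merge event, (A) and (B) are the new content; re-establishing the strengthened hypothesis for the resulting partition is exactly (C). At a non-merge event (the ray stays inside its own component) only a lighter version of the same check is needed, since the partition does not change.

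The geometric core is the following claim: whenever the algorithm pops a queue edge $\edge=\pa_1\pa_2$ and shoots a ray from $\pa_1$ towards $\pa_2$, the whole segment $\pa_1\pa_2$ lies inside $\CHX{\PSet}$, where $\PSet$ is the set of the current partition containing the input polygon $\Poly_1\ni\pa_1$. Indeed $\pa_1\in\Poly_1\subseteq\CHX{\PSet}$, and $\edge$ was placed in $\queue$ either as a new convex-hull edge of an input polygon --- then $\pa_2$ is another vertex of that same polygon $\Poly_1$, so $\pa_2\in\Poly_1\subseteq\CHX{\PSet}$ --- or as one of the two bridge edges created when some node $v$ of $\Forest$ was formed --- then $\pa_1,\pa_2$ are both vertices of $\CHX{\polysX{v}}$ with $\polysX{v}\subseteq\PSet$, so $\pa_2\in\CHX{\polysX{v}}\subseteq\CHX{\PSet}$. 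Convexity of $\CHX{\PSet}$ then gives $\pa_1\pa_2\subseteq\CHX{\PSet}$. I also use the observation preceding the lemma: every traced ray-segment lies inside the convex hull of its current component.

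Now consider a merge event (say the $t$-th one): the ray from $\pa_1$ hits an entity $\Poly$ at a point $\pr$ lying in a component different from the component of $\pa_1$; these two components are exactly $\PSet_t$ and $\PSet_t'$, with respective bridge sets $\SegSet_t$ and $\SegSet_t'$. \textbf{(A):} $\pr\in\pa_1\pa_2\subseteq\CHX{\PSet_t}$ by the previous paragraph, while $\pr$ lies on $\Poly$, which is a polygon of $\PSet_t'$ or a traced ray-segment of $\PSet_t'$, hence $\pr\in\CHX{\PSet_t'}$; so $\CHX{\PSet_t}\cap\CHX{\PSet_t'}\neq\emptyset$. \textbf{(B):} $\PSet_t,\PSet_t'$ are distinct classes of $\PPart_t$, hence disjoint; by the induction hypothesis they are bridgeable by $\SegSet_t,\SegSet_t'$; and $\SegSet_t\cap\SegSet_t'=\emptyset$ because these are disjoint subsets of the global collection of traced ray-segments, which the data-structure of Ishaque \etal keeps pairwise interior-disjoint (each shot ray is inserted into the scene, and later rays stop at the first earlier object). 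It remains to check that $\PSet_t\cup\PSet_t'$ is bridgeable, via $\SegSet_t\cup\SegSet_t'\cup\{\pa_1\pr\}$: the five conditions of \defref{compatible}/bridgeability reduce to facts about the new bridge $\pa_1\pr$, namely it joins $\pa_1\in\partial\Poly_1\subseteq\bigcup\PSet_t$ to $\pr\in\bigcup\PSet_t'$ (so the union is connected); its relative interior is disjoint from every polygon and every earlier segment, since $\pa_1\pr$ is the portion of the ray up to its first hit; the polygons it meets are only $\Poly_1\in\PSet_t$ and possibly $\Poly\in\PSet_t'$; $\pa_1$ lies on $\partial\Poly_1$; and $\pa_1\pr\subseteq\pa_1\pa_2\subseteq\CHX{\PSet_t}\subseteq\CHX{\PSet_t\cup\PSet_t'}$. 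Hence $\PSet_t$ and $\PSet_t'$ are compatible.

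Finally, \textbf{(C)} and maintenance of the invariant: by (A) and the hypothesis, $\PSet_t$ and $\PSet_t'$ are tight with intersecting convex hulls, so $\PSet_t\cup\PSet_t'$ is tight by \lemref{tight:union}; with its bridgeability from (B), and since all other classes of $\PPart_t$ are untouched, $\PPart_{t+1}$ is proper. The strengthened hypothesis is also restored, because $\SegSet_t\cup\SegSet_t'\cup\{\pa_1\pr\}$ is precisely the set of traced ray-segments of the new component after this event. At a non-merge event the partition is unchanged, and the same five-condition check (now for a single component gaining one extra first-hit segment) shows the hypothesis survives there too. The main obstacle is exactly the bookkeeping made explicit in the last two paragraphs: compatibility (\defref{compatible}) needs the two bridge sets to be \emph{disjoint}, not merely to exist, which forces one to carry the specific bridge set --- the component's traced ray-segments --- through the induction; and one must establish that a popped queue edge always has both endpoints inside the current convex hull of the ray's component, even though that component may have grown between the creation of the edge and its processing.
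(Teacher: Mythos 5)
Your proof is correct and follows essentially the same route as the paper's: induction on the merge events, establishing (A) from the traced ray lying in $\CHX{\PSet_t}$ while its hit point lies in $\CHX{\PSet_{t}'}$, (B) by taking the new bridge set to be $\SegSet_t \cup \SegSet_t' \cup \{\ray\}$, and (C) via \lemref{tight:union}. You are in fact somewhat more careful than the paper, which simply asserts $\ray \subseteq \CHX{\PSet_t}$ where you justify it by tracking where queue edges come from, and which leaves the strengthened induction hypothesis (that the bridge set is exactly the component's traced ray-segments) implicit.
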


\begin{proof}
    Recall that $\PPart_0$ is proper.  So, assume the claim holds,
    inductively, for $0 \le t < n$, and consider $t=n$.  If the
    algorithm merged $\PSet$ and $\PSet'$, at time $n$, then a ray
    shot $\ray$ originating from a vertex of a polygon in $\PSet$, hit
    a polygon in $\PSet'$, or it hit a previously inserted segment
    $\seg$, such that $\polysX{\currX{\vorigX{\seg}}} = \PSet'$.  The
    resulting segment satisfies $\ray \subseteq \CHX{\PSet}$, and
    $\ray \cap \CHX{\PSet'} \neq \emptyset$, which implies that
    $\CHX{\PSet}$ intersects $\CHX{\PSet'}$ (i.e., it is a justified
    merge). This readily implies, by induction, that
    $\PSet \cup \PSet'$ is tight.

    By induction, $\PSet$ and $\PSet'$ are bridged by sets of segments
    $\SegSet$ and $\SegSet'$, respectively. The segments of $\SegSet$
    and $\SegSet'$ are interior disjoint,
    $\SegSet \subseteq \CHX{\PSet}$, and
    $\SegSet' \subseteq \CHX{\PSet'}$. The interior of the segment/ray
    $\ray$ does not intersect the set
    $\PSet \cup \PSet' \cup \SegSet \cup \SegSet'$. Furthermore, one
    endpoint of $\ray$ belongs to $\PSet$, and the other belongs to
    $\PSet' \cup \SegSet'$. Thus, $\PSet \cup\PSet$ is bridgeable by
    $\SegSet \cup \SegSet' \cup \{\ray\}$, and $\PSet$ and $\PSet'$
    are compatible.
\end{proof}

\begin{lemma}
    \lemlab{proof:convexification}%
    Consider the partition $\PPart_m$ computed by the algorithm (just
    before the beginning of the second stage). Then, for any two sets
    $\PP_1, \PP_2 \in \PPart_m$, we have that
    $\partial \CHX{\PP_1} \cap \partial \CHX{\PP_2} = \emptyset$.
\end{lemma}

\begin{proof}
    For the sake of contradiction, assume that there are disjoint sets
    $\PP_1, \PP_2 \in \PPart_{m}$, such that there are edges
    $\edge_1 \in \partial \CHX{\PP_1}$ and
    $\edge_2 \in \partial \CHX{\PP_2}$ that intersect. Since the input
    polygons are disjoint, at least one of these edges, say $\edge_1$
    does not belong to the original polygons. But then, the algorithm
    performed a ray shooting query along it. Since $\PP_2$ is tight,
    by \lemref{inv:2}, it follows that this ray shooting must hit some
    original polygon, that is not in $\PP_1$. But that would readily
    imply that the algorithm would further enlarge $\PP_1$ to be a
    bigger connected set, by merging the connected component being hit
    into $\PP_1$, but this is a contradiction to $\PP_1$ being a set
    in the final partition computed by the first stage of the
    algorithm.
\end{proof}

\begin{lemma}
    The algorithm computes a convexification of the original input
    polygons.
\end{lemma}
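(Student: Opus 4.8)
The plan is to verify that the sequence of partitions $\PPart_0, \PPart_1, \ldots, \PPart_m$ produced by the first stage of the algorithm, followed by the cleanup stage, matches the definition of a convexification (\defref{convexification}). I would break this into two checks: that every merge step performed by the algorithm is a legal convexification step, and that the terminal partition (after cleanup) has the required non-intersection property.

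For the first check, observe that the initial partition $\PPart_0$ has every input polygon as a singleton, exactly as required. For a general step $t$, the algorithm merges two sets $\PSet_t$ and $\PSet_t'$, and by part (A) of \lemref{inv:2} we know $\CHX{\PSet_t}$ intersects $\CHX{\PSet_t'}$, so condition (i) of \defref{convexification} holds; condition (ii) holds by construction, since the algorithm replaces $\PSet_t$ and $\PSet_t'$ in the partition by $\CHX{\PSet_t \cup \PSet_t'}$ (more precisely, by the set $\PSet_t \cup \PSet_t'$, whose combined convex hull is what is stored in $\DSCH$). One subtlety here: the algorithm actually merges on the basis of connected components in the bridged scene, not directly on convex-hull intersection, so I need \lemref{inv:2}(A) to certify that each such merge is in fact justified as a convexification step — i.e., that the two convex hulls genuinely intersect.

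For the second check — that the output partition has pairwise non-intersecting convex hulls — I would argue in two parts corresponding to the two stages. After the first stage, \lemref{proof:convexification} gives that for any $\PP_1, \PP_2 \in \PPart_m$ the boundaries $\partial\CHX{\PP_1}$ and $\partial\CHX{\PP_2}$ are disjoint. Disjoint boundaries means the two convex regions are either disjoint or one is nested inside the other. The cleanup stage removes exactly the nested (contained) polygons by a sweep, so after cleanup no containment remains and, together with disjoint boundaries, the surviving convex polygons are pairwise non-intersecting — which is the terminal condition of \defref{convexification}. Combining the two checks: the algorithm's sequence of partitions (with the cleanup absorbed as the identification of the final non-intersecting family) is a valid convexification of $\IP$.

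I expect the main obstacle to be the bookkeeping around the cleanup stage: \defref{convexification} as literally stated only allows merges of intersecting pairs and then demands pairwise non-intersection at the end, whereas the algorithm's first stage can leave behind convex hulls that are \emph{nested} (boundaries disjoint but one inside the other), and these nested pairs do not "intersect" in the sense of \defref{convexification}'s termination clause — so one must be careful to argue that discarding a polygon contained in another does not change the set of convex hulls in a way that violates the convexification framework (indeed a contained polygon contributes nothing new, so it can be viewed as having been absorbed). The cleaner route, which I would take, is to note that a set $\PP$ nested inside another set $\PP'$ in $\PPart_m$ must actually have been mergeable — but \lemref{proof:convexification} already rules out boundary intersections, so nesting genuinely can occur; hence I must instead appeal directly to the fact that \lemref{the:same} makes the final cover depend only on the connected components, and the cleanup stage just reports those components faithfully. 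The remaining steps are routine once \lemref{inv:2} and \lemref{proof:convexification} are in hand.
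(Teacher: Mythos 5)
Your proposal follows essentially the same route as the paper: merges in the first stage are certified as legal convexification steps by \lemref{inv:2}(A), the terminal partition has pairwise disjoint convex-hull boundaries by \lemref{proof:convexification}, and the cleanup sweep handles the remaining nested components. One point needs correcting, though. You assert that nested pairs ``do not intersect in the sense of \defref{convexification}'s termination clause'' and that this creates a bookkeeping obstacle requiring an appeal to \lemref{the:same}. That is backwards: if $\CHX{\PP_1} \subseteq \CHX{\PP_2}$ then $\CHX{\PP_1} \cap \CHX{\PP_2} = \CHX{\PP_1} \neq \emptyset$, so the two hulls \emph{do} intersect, the partition $\PPart_m$ does not yet satisfy the termination condition, and the cleanup merge of $\PP_1$ into $\PP_2$ is itself a perfectly legal convexification step under \defref{convexification} (condition (i) holds because the hulls intersect, and $\CHX{\PP_1 \cup \PP_2} = \CHX{\PP_2}$). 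This is exactly how the paper treats it, and it dissolves the obstacle without any detour through \lemref{the:same} --- which in any case addresses uniqueness across convexifications, not whether the algorithm's output is one. Your own parenthetical (``a contained polygon contributes nothing new, so it can be viewed as having been absorbed'') is the right resolution; the subsequent hedging is unnecessary.
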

\begin{proof}    
    In the end of the first stage, there might be two components
    $\PP_1$ and $\PP_2$ in the computed partition, such that
    $\CHX{\PP_1} \subseteq \CHX{\PP_2}$. This case is discovered by
    the sweeping, which merges $\PP_1$ into $\PP_2$. This together
    with \lemref{proof:convexification} implies that the resulting
    partition is indeed a convexification. Note, that by
    \lemref{inv:2}, every convex polygon being output is the
    convex-hull of a tight set of input polygons, and no two
    convex-polygons intersect, which implies that this is indeed the
    desired convexification. 
\end{proof}

\subsection{Running time}

\subsubsection{Preprocessing}%
The ray-shooting data structure, described by Ishaque \etal
\cite{ist-sprdp-12}, requires $O(N \log N)$ time and space to
construct. For a simple polygon with $n$ vertices, computing its
convex hull takes $O(n)$ time, so doing this for all polygons in $\IP$
takes $O(N)$ time overall. And this is also the time it takes to
compute the edges of the convex hulls which do not belong to the
original polygons and inserting them into queue $\queue$. Building the
union-find data structure also takes $O(N)$ time and space. Thus,
preprocessing takes a total of $O(N \log N)$ time and space.

\subsubsection{Main stage}%
For each edge in our queue $\queue$, algorithm shoots a ray, and
potentially unions two sets of polygons and computes their joined
convex hull. By \lemref{compatible:intersection}, the number of new
edges inserted into $\queue$ for each merge is at most 2. Thus, the
total number of edges to be inserted into $\queue$ is $O(N)$ as
well. For $N$ initial points and $O(N)$ rays, the ray-shooting data
structure takes $O(N \log^2 N)$ time and $O(N \log N)$ space.

Computing the union of two sets in a union-find data structure takes
amortized $O(\alpha (N))$ time \cite{t-cawrn-79}, where $\alpha$ is
the inverse Ackermann function. Over all merges, this becomes
$O(N \alpha (N))$ time. As described in \lemref{data_structure},
computing the joined convex hull takes $O((1 + u) \log N)$ time, where
$u$ is the number of deleted vertices. Since each vertex can be
deleted at most once, the sum of this over all merges is
$O(N \log N)$. Thus, the total time during the main stage of algorithm
is $O(N \log^2 N)$, with the bottleneck being the ray-shooting data
structure.

\SoCGVer{\bigskip}%
\myparagraph{Overall.} %
The plane sweep in the second stage takes $O(N \log N)$ time.  Thus,
the total running time is $O(N \log^2 N)$, and the algorithm uses
$O(N \log N)$ space.
\section{The result and an extension}
\seclab{result}

Putting the above together, we get the following result.

\begin{theorem}
    \thmlab{main}%
    Given a set $\IP$ of simple disjoint polygons in the plane, with
    total complexity $N$, one can compute, in $O( N \log^2 N)$ time,
    the convexification of $\IP$. The resulting set of polygons, is a
    cover of the polygons of $\IP$ by convex-polygons, of total
    minimum area.
\end{theorem}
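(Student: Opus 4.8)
The plan is to assemble components that are already in hand. First I would argue correctness: the preceding section establishes, via \lemref{inv:2}, that every merge the algorithm performs during its main stage is between two \emph{compatible} sets whose convex hulls intersect, so each such merge is a legal step of a convexification in the sense of \defref{convexification}; combined with \lemref{proof:convexification} and the cleanup sweep, the final family of convex polygons has pairwise disjoint boundaries and no nesting, so it is the terminal partition of a convexification of $\IP$. Then \lemref{the:same} tells us this terminal partition is independent of the order of merges, so the algorithm outputs the unique convexification $\CNF$ of $\IP$. The minimum-area claim (and, equivalently, maximum cardinality) is then immediate from the theorem of \secref{c:good}: each polygon of $\CNF$ is $\CHX{\UP}$ for a tight subset $\UP \subseteq \IP$, and any convex cover must cover each such tight subset by a single convex polygon, so no cover can have smaller total area.

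Next I would total the running time over the three stages. Preprocessing is $O(N\log N)$: building the ray-shooting structure of Ishaque \etal \cite{ist-sprdp-12} costs $O(N\log N)$, while computing all input convex hulls, seeding $\queue$ with the $O(N)$ new hull edges, and initializing $\DSUF$ and $\DSCH$ cost $O(N)$. For the main stage the key quantitative fact is that only $O(N)$ edges are ever pushed into $\queue$: by \lemref{compatible:intersection} the two convex-hull boundaries of a compatible pair cross at most twice, so the merge operation of \lemref{data_structure} applies and each merge introduces at most two new edges; since there are at most $n-1$ merges, this gives $O(N)$ rays in all. Consequently the ray-shooting queries cost $O(N\log^2 N)$ in total, the union-find operations cost $O(N\alpha(N))$ \cite{t-cawrn-79}, and the convex-hull merges cost $\sum_i (1+u_i)\log N = O(N\log N)$, where $u_i$ is the number of vertices destroyed by the $i$-th merge and we use that each vertex is destroyed at most once. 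The cleanup sweep adds $O(N\log N)$. Summing, the bottleneck is ray shooting, and the bound is $O(N\log^2 N)$ time (with $O(N\log N)$ space).

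The hard part is not this final assembly but the invariant it rests on: that each merge enlarges a convex hull by only two edges, which is what keeps the number of ray shootings linear. This is exactly why \secref{analysis} maintains the stronger invariant that the algorithm's partition stays \emph{proper} and that the pair being merged is always compatible (\lemref{inv:2}) --- compatibility is precisely the hypothesis that \lemref{compatible:intersection} needs in order to conclude that the two hull boundaries cross at most twice. Absent this, the naive divide-and-conquer or greedy-merge approaches can force a superlinear number of sequential merges and ray shootings (see \figref{seq}), and the near-linear bound would collapse. The remainder is routine: telescoping the $u_i$ charges, and quoting the stated complexities for union-find \cite{t-cawrn-79} and for the ray-shooting structure \cite{ist-sprdp-12}.
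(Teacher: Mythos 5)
Your proposal is correct and matches the paper's own treatment: the paper proves \thmref{main} exactly by ``putting the above together,'' i.e., combining \lemref{inv:2}, \lemref{proof:convexification}, and the cleanup sweep for correctness, \lemref{the:same} and the tightness argument of \secref{c:good} for uniqueness and minimality of area, and the stage-by-stage accounting (ray shooting $O(N\log^2 N)$ as the bottleneck, union-find $O(N\alpha(N))$, telescoped hull merges $O(N\log N)$) for the running time. You also correctly identify that the linear bound on the number of rays hinges on \lemref{compatible:intersection} via the compatibility invariant, which is the same load-bearing point the paper emphasizes.
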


\begin{corollary}
    Given a set $\IP$ of $N$ segments the plane, one can compute, in
    $O( N \alpha(N) \log^2 N)$ time, the convexification of $\IP$,
    where $\alpha$ is the inverse Ackermann function. The resulting
    set of polygons, is a cover of the segments of $\IP$ by
    convex-polygons, of total minimum area.
\end{corollary}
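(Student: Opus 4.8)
The plan is to reduce the segment problem to the disjoint‑polygon machinery of \thmref{main}, via two adjustments: extending the structural results of \secref{c:good}--\secref{analysis} to degenerate inputs, and a preprocessing step that resolves all self‑intersections of the input.

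First I would observe that a segment equals its own convex hull, i.e.\ a degenerate convex polygon, and then check that \secref{c:good}--\secref{analysis} survive allowing such members in $\IP$ (after the cosmetic change of weakening ``disjoint'' to ``interior‑disjoint where not forced to meet'' and reading ``tight'' as ``connected'' where appropriate). This is routine: \lemref{the:same} uses only convex‑hull containment and the \Caratheodory theorem, with no appeal to disjointness or nondegeneracy; \lemref{tight:line}, \lemref{tight:union} and the min‑area theorem use only separation by lines; and the pseudo‑disk argument behind \lemref{compatible:intersection} applies to disjoint closed segments exactly as to disjoint simple polygons (if one segment lies inside the other convex region, the two boundaries are disjoint). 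Consequently the convexification of a set of segments is again well defined, unique, and equal to its min‑area convex cover, so it only remains to compute it within the stated bound.

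Next I would handle the one obstruction to running the algorithm of \secref{algorithm} verbatim, namely that the ray‑shooting structure of Ishaque \etal\ needs interior‑disjoint obstacles whereas the input segments may cross. Two crossing segments lie in a common connected subset of $\IP$, hence by the structure theorem are covered by a single polygon of the convexification, so it is both safe and necessary to coalesce them at the outset. I would therefore begin by computing the connected components $C_1,\dots,C_k$ of the intersection graph of the $N$ segments (together with a spanning tree of each), and replacing each $C_i$ by the convex polygon $\CHX{C_i}$; since $\CHX{C_i}$ has the same convex hull as $C_i$, by the extended uniqueness lemma the convexification of $\IP$ equals that of $\{\CHX{C_1},\dots,\CHX{C_k}\}$. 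After the routine bookkeeping that makes these convex polygons pairwise interior‑disjoint (distinct components can still have overlapping hulls, which is resolved by computing the connected components of the hull‑overlap graph and iterating — an inner convex‑input instance of the same problem), I would feed the resulting disjoint convex polygons to the algorithm of \secref{algorithm}, seeding the queue $\queue$ with all boundary edges of each polygon, since a convex polygon contributes no ``new'' hull edge of its own. The invariants of \lemref{inv:2} and \lemref{proof:convexification} and the running‑time analysis of \secref{analysis} then carry over unchanged, bounding this phase by $O(N\log^2 N)$, with the cleanup sweep costing only $O(N\log N)$.

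The main obstacle, and the source of the extra $\alpha(N)$ factor, is the preprocessing: one must compute the connected components and a spanning forest of $N$ straight segments (and likewise of the $\CHX{C_i}$) in near‑linear time, that is, certify $\Theta(N)$ ``useful'' crossings without enumerating all $\Theta(N^2)$ of them or building the arrangement. I would attempt this with a Bentley--Ottmann‑style sweep augmented by a union–find over the segments, unioning two segments the moment the sweep certifies they cross and suppressing further crossing events between segments already in the same component; the union–find contributes the $\alpha(N)$, and the delicate point I expect to be the crux is arguing that this suppression keeps the number of processed events $O(N)$ while the sweep status remains consistent enough to discover a spanning forest. Combining the two parts yields the claimed $O(N\alpha(N)\log^2 N)$ bound.
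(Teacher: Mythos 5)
Your reduction has the right shape (coalesce segments that cannot be separated, then invoke \thmref{main}), but it stalls exactly at the step you flag as the crux, and the paper's proof exists precisely to avoid that step. Computing the connected components of the intersection graph of $N$ crossing segments in near-linear time is not something you can get from a Bentley--Ottmann sweep with suppressed events: once you decline to process a crossing between two segments already in the same component, the vertical order of the sweep status becomes wrong below that crossing, and adjacency tests in the status structure then fail to certify crossings between segments of \emph{different} components, so the union--find may terminate with too many components. You have not repaired this, and the claimed $O(N\alpha(N)\log^2 N)$ preprocessing bound is therefore unsupported. A second gap: after replacing each component $C_i$ by $\CHX{C_i}$ you no longer have disjoint simple polygons (distinct components can have overlapping hulls), so \thmref{main} does not apply to $\{\CHX{C_1},\ldots,\CHX{C_k}\}$; your fix --- iterate merging on the hull-overlap graph --- is an instance of the very problem being solved, with no termination or cost analysis.

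The paper dispatches both difficulties in one stroke: compute the \emph{outer face} of the arrangement of the $N$ segments. By Davenport--Schinzel theory the boundary of a single face has complexity $O(N\alpha(N))$ and can be computed in $O(N\alpha(N)\log^2 N)$ time \cite{sa-dsstg-95}; since the outer face is adjacent to the outer boundary of every connected component of the union of segments, its boundary components simultaneously identify the connected components and hand you each one as a closed, non-crossing (hence simple, possibly degenerate) polygon, and these polygons are pairwise disjoint by construction. Feeding them to \thmref{main} on input of size $O(N\alpha(N))$ gives the stated bound. If you want to salvage your route, you should replace your sweep by this single-face computation (or cite an actual near-linear connected-components algorithm for segments) and feed the algorithm the boundary curves rather than the convex hulls.
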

\begin{proof}
    Compute the outer face of the arrangement of the segments of
    $\IP$. This takes $O( N \alpha(N) \log^2 N)$ time
    \cite{sa-dsstg-95}. Every connected component of the boundary of
    this face, can be interpreted now as a simple polygon, and the
    total complexity of these polygons is $O( N \alpha(N))$
    \cite{sa-dsstg-95}.  The task at hand, is to compute the
    convexification of these polygons, which can be done in
    $O( N \alpha(N) \log^2 N)$ time, by \thmref{main}.
\end{proof}

\section{Data structure for dynamic maintenance %
   of convex-hull}
\seclab{c:h:datastructure}

Here, we describe how to build the data-structure of
\lemref{data_structure} -- we emphasize that the resulting
data-structure is a relatively easy variant of known results, and the
detailed description is included here for the sake of completeness.

\subsection{Representing a convex polygon}
We maintain the convex-hull of each polygon as two lists of edges for
the top and bottom chains, respectively. Each chain is stored from
left to right, using a balanced binary search tree that supports
insertions, and deletion.  In addition, we need the \splitop operation
-- it break such a (sorted) list into two sorted lists, that starts
and ends at a specific object. Similarly, we need the \joinop
operation, which merges two sorted lists (where one chain is to the
left of other).

In addition, we need in-order successor/predecessor queries in
constant time. We augment the tree, such that every internal node
(which stores an edge), also stores the first and last edges stored in
this subtree.

A specific implementation of a balanced binary search tree that
provides the desired properties is a red-black tree -- all operations
can be performed in $O( \log n)$ time, except for the
successor/predecessor operations which takes $O(1)$ time.

\subsection{Deciding if a point is inside a polygon}

This readily follows by doing a binary search on the top and bottom
chains to find the edges intersecting the vertical line through the
query point. This clearly takes $O( \log N)$ time.

\subsection{Deciding if two polygons intersect}

Every polygon is represented using two chains -- the algorithm checks
all four possible combination if they intersect.

The algorithm for checking if two polygons, each represented by a
chain, intersects is recursive.  Initially, the chain is represented
by the root node of the tree. Thus, the recursive intersection checker
is given two nodes $u,v$ in the two respective trees representing the
chains, and the task is to decide if the two subpolygons of $u$ and
$v$ intersect.

Specifically, for a node $v$, let $\EdgesX{v}$ be all the edges of the
chains stored in the subtree of $v$. The subpolygon associated with
$v$, is $\Poly_v = \CHX{v}$. As such, the task is to decide if
$\CHX{ v}$ and $\CHX{u}$ intersects. Let $\eminX{v}$ and $\emaxX{v}$
be the leftmost and rightmost edges stored in $\EdgesX{v}$,
respectively. Let $\edgeX{v}$ be the edge stored in $v$.  Let
$\spanX{v}$ be the segment connecting the two $x$-extreme vertices of
$\eminX{v}$ and $\emaxX{v}$. See \figref{inner:outer}.

\begin{figure}
    \hfill%
    \includegraphics[page=1]{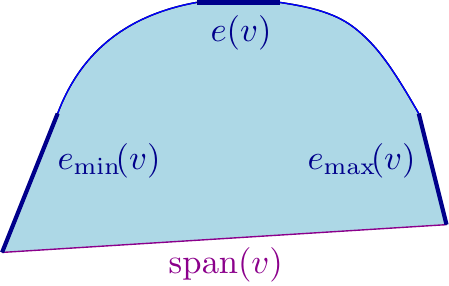} \hfill%
    \includegraphics[page=2]{figs/inner_outer}%
    \hfill%
    \phantom{}
    \caption{Inner and outer approximation of the portion of the
       convex body stored at a vertex $v$.}%
    \figlab{inner:outer}
\end{figure}
   
In constant time, one can compute the \emphi{inner} approximation
\begin{equation*}
    \IAX{v} = \CHX{ \edgeX{v} \cup \eminX{v} \cup \emaxX{v} \bigr.}    
\end{equation*}
to $\Poly_v$. The \emphi{outer} approximation $\OAX{v}$ is the
intersection of the four halfspaces containing $\IAX{v}$, with their
boundary lines passing through the four edges
$\edgeX{v}, \eminX{v}, \emaxX{v}, \spanX{v}$. Clearly, the outer
approximation can also be computed in constant time.

If the two inner approximations $\IAX{u}$ and $\IAX{v}$ intersect,
then the algorithm returns the two polygons intersect and return an
intersection point.  If the two outer approximations $\OAX{u}$ and
$\OAX{v}$ do not intersect, then the algorithm returns that there is
no intersection.

The set $\diffX{v} = \OAX{v} \setminus \IAX{v}$ is made out of two
triangles.  If the two inner approximations do not intersect, then
there is a line $\Line$ that separates them, and this line intersects
only one triangle of $\diffX{v}$. If this triangle intersects
$\IAX{u}$, then the algorithm continues the search recursively on the
child of $v$ that corresponds to this triangle, and $u$. See
\figref{ear}. Similarly, if one of the triangles of $\diffX{u}$
intersects $\IAX{v}$, then the algorithm continues recursively on the
appropriate child of $u$, and $v$.

\begin{figure}[h]
    \centerline{\includegraphics{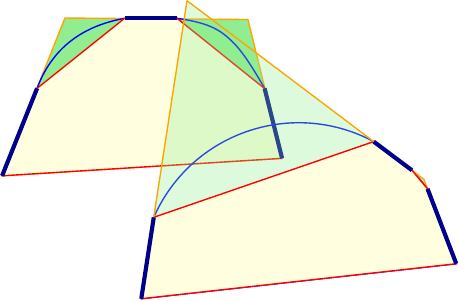}}%
    \caption{An ear of an outer approximations, intersects the inner
       approximation of the other body.}
    \figlab{ear}
\end{figure}

So, the only situation that remains is that one triangle of
$\diffX{v}$ intersects one triangle of $\diffX{u}$. The algorithm
continues recursively the search for intersection in the two
respective children of $u$ and $v$.

Since the depth of the two trees is logarithmic, it follows that an
intersection point, if it exists, would be found in $O( \log N)$ time.

\subsection{Computing the intersection points %
   of a polygon with a line}
\seclab{i:points}

We are given a query line $\Line$, and a convex polygon $\CPoly$. We
are interested in computing the two endpoints of the segment
$\CPoly \cap \Line$.  As before, the algorithm computes the
intersections with the top and bottom parts separately.  Given a node
$v$, the algorithm continues recursively into a triangle of
$\diffX{v}$ $\iff$ if it intersects $\Line$. This results in (at most)
two paths in the tree, which can be computed in $O( \log N)$. The two
edges the leaves of these paths corresponds two, contains the two
endpoints, which can be readily computed.

\subsection{Computing the convex-hull of two %
   intersecting convex polygons}

We are given two convex polygons $\CPoly_1$ and $\CPoly_2$ that
intersect. Importantly, the two polygons intersect as pseudo-disks --
their boundaries intersect at most twice. The task at hand is to
compute $\CHX{\CPoly_2 \cup \CPoly_2}$. To this end, the algorithm
computes a point $\pa \in \CPoly_1 \cap \CPoly_2$, using the algorithm
of \secref{i:points}.

\begin{figure}[h]
    \includegraphics[page=1,width=0.22\linewidth]{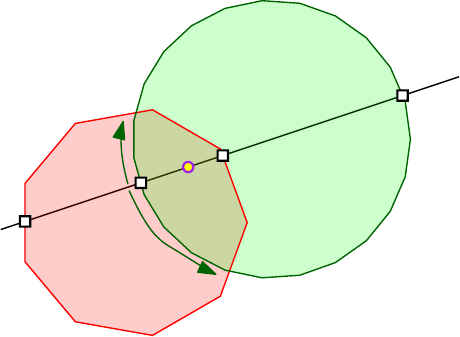} \hfill
    \includegraphics[page=2,width=0.22\linewidth]{figs/delete} \hfill
    \includegraphics[page=3,width=0.22\linewidth]{figs/delete} \hfill
    \includegraphics[page=4,width=0.22\linewidth]{figs/delete}
    \caption{Illustration of the algorithm for computing the
       convex-hull of two convex polygons.}
    \figlab{convex:intersections}%
\end{figure}

The algorithm is depicted in \figref{convex:intersections}.  Let
$\Line$ be the horizontal line though $\pa$, and consider the four
intersection points along $\Line$. The middle two intersections are
inside the other convex-hull. In particular, assume that one of these
intersections is $\pa_1$, and it lies in
$\partial \CPoly_1 \cap \CPoly_2$. The algorithm walks along the edges
of $\partial \CPoly_1$, in both directions, starting at $\pa_1$,
deleting edges if they lie completely inside $\Poly_2$. Checking if an
edge lies inside $\CPoly_2$ can be done in $O( \log N)$ time by
checking if its two endpoints are inside $\CPoly_2$. This takes
$O( \log N)$ time per deleted edge.

Once the algorithm arrives to an edge of $\partial \CPoly_1$ that
intersects $\partial \CPoly_2$, it computes the intersection point
(using, say, line intersection query), and again, we start working
along the portion of $\partial \CPoly_2$ that lies inside
$\CPoly_1$. This results in discovering the two intersection points of
the boundaries of $\CPoly_1$ and $\CPoly_2$. The algorithm thus
deleted all the edges of one polygon that lines inside another. Next,
the algorithm modifies the intersecting edges, so that they share the
intersection point as a common endpoint. The algorithm then performs a
join operation on the top and bottom chains of the two polygons, to
get the representation of the two polygons $\CPoly_1 \cup
\CPoly_2$. Naturally, this polygon is no longer convex, However, the
algorithm can now compute the two bridges between the two chains. To
this end, starting with the intersection vertex, the algorithm checks
if a vertex is a valley (i.e., concave), and if so, it removes it. It
keep doing this check till all the vertices in this vicinity are
convex. Clearly, this take time proportional the number of vertices
deleted. The algorithm does a similar process on the bottom
intersection point. Together, this results in the desired convex-hull,
in time $O( (1+u) \log N)$, where $u$ is the number of vertices of
$\CPoly_1$ and $\CPoly_2$ that are no longer on the boundary of the
final convex-hull.

\begin{remark:unnumbered}
    Note, that the property that the convex-hull boundaries intersect
    only in two points is critical in making the above algorithm
    work. If there are more intersections, then it is not even clear
    how to efficiently compute them in time proportional to the number
    of intersections.
\end{remark:unnumbered}

 \providecommand{\CNFX}[1]{ {\em{\textrm{(#1)}}}}
  \providecommand{\CNFCCCG}{\CNFX{CCCG}}

\end{document}